\newcommand{\mrm}[1]{\mathrm{#1}}
\newcommand{\mi}[1]{\mathit{#1}}
\newcommand{\mt}[1]{\mathtt{#1}}
\newcommand{\LST}[1]{[{#1}]}
\newcommand{\defEq}{{:=}}
\newcommand{\synDefEq}[0]{::=}
\newcommand{\interp}[1]{\llbracket{#1}\rrbracket}
\newcommand{\nats}{\mathbb{N}}
\newcommand{\ints}{\mathbb{Z}}
\newcommand{\lstPrepend}[0]{::}
\newcommand{\emptyList}[0]{[\,]}
\newcommand{\TT}[0]{\mi{tt}}
\newcommand{\FF}[0]{\mi{ff}}
\newcommand{\optional}[1]{{#1}_{\bot}}
\newcommand{\component}[2]{{#1}.{#2}}
\newcommand{\Inference}[2]{\begin{array}{@{}c@{}}#1\\[0em]\hline\\[-0.9em]#2\\ \end{array}}
\newcommand{\wildcard}[0]{{\_}}
\newcommand{\powerSet}[1]{\mathcal{P}({#1})} 
\newcommand{\premise}[2]{({#1},{#2})}
\newcommand{\concl}[2]{({#1},{#2})}
\newcommand{\RULE}[2]{\mi{rule}~{[#2]}~{#1}} 
\newcommand{\DERIV}[1]{\mi{deriv}({#1})}
\newcommand{\INFER}[3]{\mi{infer}^{#3}({#1},{#2})}
\newcommand{\specValid}[1]{\mi{valid}({#1})} 
\newcommand{\rConfigFor}[3]{\mi{res}^{#3}({#1},{#2})}
\newcommand{\verifySpec}[1]{\mi{verif}({#1})}
\newcommand{\setsOfRConfigsMeta}[0]{P}
\newcommand{\specsMeta}[0]{\Phi}
\newcommand{\specLe}[0]{\preceq}
\newcommand{\fSpecLe}[2]{{#1}\specLe{#2}} 
\newcommand{\pgVar}[1]{\mt{#1}}
\newcommand{\pgVarVal}[2]{{#1}(\pgVar{#2})} 
\newcommand{\configs}[0]{C}
\newcommand{\rConfigs}[0]{R}
\newcommand{\configsMeta}[0]{c}
\newcommand{\rConfigsMeta}[0]{r}
\newcommand{\whConfig}[2]{\langle {#1}, {#2} \rangle}
\newcommand{\whStatesMeta}[0]{\sigma}
\newcommand{\whStates}[0]{\Sigma}
\newcommand{\valsMeta}[0]{v}
\newcommand{\vars}{\mi{Var}}
\newcommand{\varsMeta}[0]{x}
\newcommand{\op}[0]{\mi{op}}
\newcommand{\aop}[0]{\mi{aop}}
\newcommand{\cop}[0]{\mi{cop}}
\newcommand{\stmts}[0]{Stmt}
\newcommand{\exprsMeta}{e}
\newcommand{\stmtsMeta}[0]{S}
\newcommand{\trueLit}{\mathsf{true}}
\newcommand{\falseLit}{\mathsf{false}}
\newcommand{\SKIP}[0]{\mathsf{skip}}
\newcommand{\ASSG}[2]{{#1}:={#2}}
\newcommand{\SEQ}[2]{{#1};{#2}} 
\newcommand{\IF}[3]{\mathsf{if}~{#1}~\mathsf{then}~{#2}~\mathsf{else}~{#3}}  
\newcommand{\WHILE}[2]{\mathsf{while}~{#1}~\mathsf{do}~{#2}} 
\newcommand{\whAEval}[1]{\mathcal{A}\llbracket{#1}\rrbracket}
\newcommand{\whBEval}[1]{\mathcal{B}\llbracket{#1}\rrbracket}
\newcommand{\aExpsMeta}[0]{a}
\newcommand{\bExpsMeta}[0]{b} 
\newcommand{\progsMeta}[0]{\rho} 
\newcommand{\pdsMeta}[0]{w}
\newcommand{\arrRef}[2]{{#1}[{#2}]} 
\newcommand{\funCall}[3]{{#1}({#2})\to{\LST{#3}}} 
\newcommand{\numeralsMeta}[0]{n}
\newcommand{\fidsMeta}[0]{f}
\newcommand{\varDecl}[1]{\mathsf{var}~{#1}}
\newcommand{\arrDecl}[2]{\mathsf{arr}~{#1}[{#2}]}
\newcommand{\addExpr}[2]{{#1}+{#2}}
\newcommand{\subExpr}[2]{{#1}-{#2}}
\newcommand{\mulExpr}[2]{{#1}*{#2}}
\newcommand{\divExpr}[2]{{#1}\,/\,{#2}} 
\newcommand{\andExpr}[2]{{#1}{\,\&\!\&\,}{#2}}
\newcommand{\notExpr}[1]{{!{#1}}}
\newcommand{\arrsMeta}[0]{X}
\newcommand{\storesMeta}[0]{s}
\newcommand{\locBoundsMeta}[0]{\iota} 
\newcommand{\intsMeta}[0]{z}
\newcommand{\locsMeta}[0]{\ell}
\newcommand{\ewhAEval}[1]{\mathcal{A}\llbracket{#1}\rrbracket}
\newcommand{\ewhBEval}[1]{\mathcal{B}\llbracket{#1}\rrbracket}
\newcommand{\ewhNEval}[1]{\mathcal{N}\llbracket{#1}\rrbracket}
\newcommand{\ewhStatesMeta}[0]{\sigma}
\newcommand{\ewhState}[2]{({#1}, {#2})} 
\newcommand{\ewhConfig}[3]{\langle {#1},{#2} \rangle_{#3}} 
\newcommand{\ewhRConfig}[2]{({#1},{#2})}
\newcommand{\storeUpd}[3]{{#1}[{#2}\mapsto {#3}]}
\newcommand{\heapUpd}[3]{{#1}[{#2}\mapsto {#3}]} 
\newcommand{\callIniSt}[4]{\mi{call}\mbox{-}\mi{ini}({#1},{#2},{#3},{#4})} 
\newcommand{\callFinSt}[4]{\mi{call}\mbox{-}\mi{fin}({#1},{#2},{#3},{#4})}  
\newcommand{\listsOfNamesMeta}[0]{\mi{ws}}
\newcommand{\listsOfValsMeta}[0]{\mi{vs}}
\newcommand{\listsOfVarsMeta}[0]{\mi{xs}}
\newcommand{\arrs}[0]{\mi{Arr}} 
\newcommand{\varOrArrFragMeta}[0]{u}
\newcommand{\arrFrag}[3]{{#1}^{#3}_{#2}} 
\newcommand{\fVarsMeta}[0]{x}
\newcommand{\fExprsMeta}[0]{e}
\newcommand{\fNumConstsMeta}[0]{n}
\newcommand{\trueConst}[0]{\mathsf{true}}
\newcommand{\falseConst}[0]{\mathsf{false}} 
\newcommand{\LETRECIN}[3]{\mathsf{letrec}~{#1}={#2}~\mathsf{in}~{#3}}
\newcommand{\IFTHENELSE}[3]{\mathsf{if}~{#1}~\mathsf{then}~{#2}~\mathsf{else}~{#3}}
\newcommand{\LAM}[2]{\lambda {#1}.{#2}}
\newcommand{\FADD}[2]{{#1}+{#2}}
\newcommand{\FSUB}[2]{{#1}-{#2}}
\newcommand{\FMUL}[2]{{#1}*{#2}}
\newcommand{\FDIV}[2]{{#1}/{#2}}
\newcommand{\FEQ}[2]{{#1}={#2}}
\newcommand{\FLT}[2]{{#1}<{#2}}
\newcommand{\FNEG}[1]{\lnot{#1}}
\newcommand{\FAND}[2]{{#1} \land {#2}}
\newcommand{\FAPP}[2]{{#1}\,{#2}}
\newcommand{\NIL}[0]{\mathsf{nil}}
\newcommand{\CONCAT}[2]{{#1}::{#2}}
\newcommand{\LCASE}[3]{\mathsf{listcase}~{#1}~\mathsf{of}~({#2},{#3})}
\newcommand{\intCfmsMeta}[0]{\mi{icf}}
\newcommand{\boolCfmsMeta}[0]{\mi{bcf}}
\newcommand{\funCfmsMeta}[0]{\mi{fcf}}
\newcommand{\lstCfmsMeta}[0]{\mi{lcf}}
\newcommand{\cfmsMeta}[0]{\mi{cf}}
\newcommand{\interpICfm}[1]{\interp{#1}_{\mrm{icf}}}
\newcommand{\interpBCfm}[1]{\interp{#1}_{\mrm{bcf}}} 
\newcommand{\specFac}[0]{\specsMeta_{\mrm{fac}}} 
\newcommand{\stmtFac}[0]{S_{\mrm{fac}}} 
\newcommand{\stmtWh}[0]{S_{\mrm{wh}}} 
\newcommand{\FACT}[1]{{#1}!}
\newcommand{\Pm}[1]{\setsOfRConfigsMeta_{#1}}
\newcommand{\Pmfac}[2]{\setsOfRConfigsMeta'_{{#1},{#2}}}
\newcommand{\mergeFId}[0]{\mathsf{merge}}
\newcommand{\SArr}[0]{\mathsf{S}}
\newcommand{\TArr}[0]{\mathsf{T}}
\newcommand{\mVar}[0]{\pgVar{m}}
\newcommand{\nVar}[0]{\pgVar{n}} 
\newcommand{\iVar}[0]{\pgVar{i}}
\newcommand{\jVar}[0]{\pgVar{j}}
\newcommand{\kVar}[0]{\pgVar{k}}
\newcommand{\iVal}[0]{i}
\newcommand{\jVal}[0]{j}
\newcommand{\mVal}[0]{m}
\newcommand{\nVal}[0]{n}
\newcommand{\kVal}[0]{k}
\newcommand{\lVal}[0]{l}
\newcommand{\hVal}[0]{h} 
\newcommand{\aExpl}[0]{a_{\mrm{l}}}
\newcommand{\aExpm}[0]{a_{\mrm{m}}}
\newcommand{\aExph}[0]{a_{\mrm{h}}} 
\newcommand{\mergeStmt}[0]{S_{\mrm{mg}}}
\newcommand{\mgLoopStmt}[0]{S_{\mrm{wh}}}
\newcommand{\tailLoopStmt}[2]{S_{{#1},{#2}}}
\newcommand{\mergeProg}[0]{\progsMeta_{\mrm{mg}}} 
\newcommand{\mergeSortProg}[0]{\progsMeta_{\mrm{ms}}}
\newcommand{\specMSort}[0]{\specsMeta_{\mrm{mga}}}
\newcommand{\sepArrFrag}[3]{\mi{sep}({#1},{#2},{#3})}
\newcommand{\listOfArrFrag}[2]{(\!|{#1}|\!)_{#2}} 
\newcommand{\occ}[1]{\mi{occ}\,{#1}} 
\newcommand{\sorted}[1]{\mi{sorted}\,{#1}}
\newcommand{\occFuncsMeta}[0]{h} 
\newcommand{\occAdd}[2]{{#1}\oplus{#2}}
\newcommand{\lParam}[0]{l}
\newcommand{\preserved}[3]{\LST{{#3}}^{#2}_{#1}}
\newcommand{\yarrsMeta}[0]{Y}
\newcommand{\mergeExpr}[0]{\fExprsMeta_{\mrm{mg}}}
\newcommand{\ifExpr}[0]{\fExprsMeta_{\mrm{if}}}
\newcommand{\lcaseExpr}[0]{\fExprsMeta_{\mrm{lcase}}} 
\newcommand{\mergeVar}[0]{\pgVar{merge}}
\newcommand{\xFVar}[0]{\pgVar{x}}
\newcommand{\xpFVar}[0]{\pgVar{x'}}
\newcommand{\iFVar}[0]{\pgVar{i}}
\newcommand{\ipFVar}[0]{\pgVar{i'}}
\newcommand{\rFVar}[0]{\pgVar{r}}
\newcommand{\rpFVar}[0]{\pgVar{r'}}
\newcommand{\specMGList}[0]{\specsMeta_{\mrm{mgl}}}
\newcommand{\listOfLstCfm}[1]{\langle\!|{#1}|\!\rangle}
\newcommand{\listsOfIntsMeta}[0]{\mi{zs}}
\newcommand{\fSubst}[3]{{#1}[{#3}/{#2}]} 
\title{Reasoning about Iteration and Recursion Uniformly based on Big-step Semantics}
\author{}
\institute{}
\author{
  Ximeng Li\inst{1,3}, Qianying Zhang\inst{1,2}, Guohui Wang\inst{2,3}, Zhiping Shi\inst{1}, Yong Guan\inst{3}
}
\institute{
 Beijing Key Laboratory of Electronic System Reliability and 
 Prognostics
 \and
 Beijing Engineering Research Center of High Reliable Embedded System
 \and 
 Beijing Advanced Innovation Center for Imaging Theory and Technology
\\[1ex]
 Capital Normal University, Beijing, China
}
\date{} 
\begin{document}

\maketitle

\vspace*{-2ex}
\begin{abstract}
  A reliable technique for deductive program verification should be
  proven sound with respect to the semantics of the programming
  language. For each different language, the construction of a
  separate soundness proof is often a laborious undertaking. In
  language-independent program verification, common aspects of
  computer programs are addressed to enable sound reasoning for all
  languages. In this work, we propose a solution for the sound
  reasoning about iteration and recursion based on the big-step
  operational semantics of any programming language. We give inductive
  proofs on the soundness and relative completeness of our reasoning
  technique.  We illustrate the technique at simplified programming
  languages of the imperative and functional paradigms, with diverse
  features. We also mechanism all formal results in the Coq proof
  assistant.
  
  
\end{abstract}


\section{Introduction}



%
It is commonly accepted that a reliable technique for deductive
program verification should be designed with the formal semantics of
the programming language as foundation.
With the formal semantics used as axioms, a mathematical proof of a
desired property for the target program can be constructed.
Direct program proofs based on operational semantics are often
cumbersome.
Due to language constructs that may incur unbounded program behavior,
inductive proofs along the structure of semantic derivations are
expected~\cite{NielsonNielson2007}. 


An established method for simplifying the verification is by devising a
program logic (e.g.,~\cite{Hoare1969,Reynolds2002}) for the
programming language.
Program logics effectively reduce the burdens in dealing with many
aspects of the verification, such as the reasoning about loops,
recursive function calls, memory layout of objects, concurrency, etc.
%
The effectiveness of program logics has been demonstrated by powerful
tools (e.g.,~\cite{VCC,Appel2011,AhrendtBBHSU2016,JungKJBBD18}) and
significant projects (e.g.,~\cite{SewellMyreenKlein2013}).

A price to pay for enjoying the power of program logics, however, is
the considerable amount of effort often needed in establishing their
soundness and completeness wrt. the baseline semantics -- often an
operational semantics.
There have been a plethora of programming languages designed and
implemented to meet the needs of different domains. 
The recent development of blockchain technology alone has led to the
creation of multiple languages, such as Solidity~\cite{Solidity},
Yul~\cite{Yul}, Scilla~\cite{SergeyNagarajJohannsenTrunovHao2019},
Move~\cite{Move}, Michelson~\cite{Michelson}, EVM bytecode
language~\cite{Wood2017}, etc.
Developing one program logic for each language that could be used in
scenarios where correctness is of serious concern would require a
huge amount of efforts.

To combat the cumbersomeness of direct program proofs based on
operational semantics, while avoiding the full complexity in the
development of program logics,
one could seek to establish the infrastructure necessary for reasoning
about specific kinds of language features, for any languages with
those features.
The results in \cite{Moore2003} and \cite{MoorePenaRosu2018} show how
to deal with fundamental language features that may cause unbounded
behavior, such as iteration and recursion, in a language-independent
fashion.
In \cite{Moore2003}, a technique is proposed to generate inductive 
invariants from annotated loop invariants.
%
In \cite{MoorePenaRosu2018}, a method is presented to turn the
semantics of a programming language into a program verifier by
applying coinductive reasoning principles.
%
Both developments are built on the small-step execution relation of a
generic programming language.

Small-step semantics~\cite{Plotkin1981} is known to be a fine-grained
approach to the definition of operational semantics.
%
It supports a way to model concurrent 
execution.
It also enables the differentiation of looping and abnormal
termination.
%
Big-step semantics (or natural
semantics~\cite{ClementDespeyrouxDespeyrouxKahn86,Kahn1987}), on the
other hand, can be easier to formulate.
For instance, the design of the semantic configurations need not track
the intermediate control states. 
%
Big-step semantics can also be easier to use.
It does not require the consideration of both derivation sequences and
derivation trees at the same time, in performing proofs.
There exist many formalizations of big-step
semantics (e.g.,~\cite{NipkowOheimb1998,KleinNipkow2005,Schirmer2008,BlazyLeroy2009,Hirai2017,YangLei2018})
with practical uses.

In this work, we propose a technique for reasoning about iteration and
recursion in deductive program verification based on big-step
operational semantics.
For any programming language with a big-step semantics, once a generic
predicate is defined to hold on the premises and corresponding
conclusions for the semantic rules, a theorem becomes available -- the
theorem turns the verification of partial correctness results into
symbolic execution of the target program with auxiliary information
from the user specification.
For loops and recursive function calls, this auxiliary information is
provided in the same form via the specification, enabling the same
pattern of reasoning.
%
We illustrate our technique using verification tasks involving
simplified imperative and functional languages.
We  mechanize the proofs of all formal results~\cite{Li2021} in
the Coq proof assistant~\cite{Coq}.

The main technical contributions of this article are:

\begin{itemize}
\item a language-independent technique simplifying the deductive
  verification of iterative and recursive program structures based on
  big-step semantics,
\item proofs for the soundness and relative completeness of the
  technique,
\item illustration of the technique with the verification of example
  programs in simplified programming languages of different paradigms,
\item mechanization of proofs and verification examples in the Coq
  proof assistant.
\end{itemize}

We provide an infrastructure that handles the routine part of the work
in reasoning about programming constructs with potentially unbounded
behavior, based on a common model of big-step execution in a proof
assistant.
This provides a basis for a language-independent deductive program
verifier.

%


\paragraph{Structure.} The remaining part of this article is structured 
as follows.
In \cref{sec:related-work}, we discuss related work. 
%
In \cref{sec:method}, we introduce the reasoning technique, and prove
its soundness.
In \cref{sec:illustrative-example}, we illustrate the technique with a
toy example that is developed in detail.
In \cref{sec:verification}, we present further verification examples
targeting simplified imperative and functional languages.
In \cref{sec:completeness}, we discuss the completeness of the
verification technique. 
In \cref{sec:discussion}, we discuss potential
improvements of our technique. 
Finally, we conclude in \cref{sec:conclusion}.

\vspace*{-1.5ex}
\section{Related Work}\label{sec:related-work}
\vspace*{-1ex}

Inductive invariants~\cite{McCarthy1962} are well-studied means to
sound program verification directly based on operational execution
models.
To avoid the difficulty of specifying an inductive invariant that must
be preserved by all the atomic steps that can be performed by a
program, a method is proposed to generate inductive invariants from
inductive assertions~\cite{Moore2003}.
The verification of the generated inductive invariants concludes the
verification of the target program.
The soundness result of the generation is proven once and can be used
for different languages with a small-step execution relation.
In comparison to this work, our technique targets big-step
operationals semantics, and its soundness does not rely on the
reduction of the verification problem to the generation of inductive
invariants.
In \cite{MoorePenaRosu2018}, a technique is proposed to generate sound
program verifiers based on existing formalizations of small-step
semantics in proof assistants.
The soundness of the technique is established with an coinductive
argument.
In comparison, our technique targets big-step operational semantics,
and is based on inductive reasoning.
Nevertheless, we are inspired by this work in the style of
language-independent program specifications and the form of
completeness statements.

In \cite{StefanescuParkYuwenLiRosu2016}, a language-independent
verification technique based on reachability logics and semantics
formulated in rewriting systems is introduced.
In comparison, our technique can only be used for big-step semantics.
However, our technique can be used with semantic definitions using
inductive predicates in a proof assistant, and requires only the
logical foundation of the proof assistant to function.
Our technique also has a succinct, inductive argument for soundness.  


Several developments provide means to systematically derive abstract
semantics from concrete semantics such as big-step operational
semantics and its
variants~\cite{BodinGardnerJensenSchmitt2019,Schmidt1995,BodinJensenSchmitt2015}.
Among these, \cite{BodinGardnerJensenSchmitt2019} proposes a
language-independent notion of skeletal semantics that can be
instantiated to obtain concrete and abstract semantic interpretations.
However, the emphasis of these developments is in obtaining automated
static analyses of programs, rather than in exploiting user-provided
specification in the deductive verification of deep correctness
properties.

To some extent, language-independent program verification can also be
supported by encoding the target languages or target programs in the
same language (e.g., WhyML, Boogie, etc.) or calculus (e.g.,
CSP, the $\pi$-calculus, etc.) supporting verification.
%
This encoding can be considerably more light-weight than the direct
formalization of the syntax and semantics of the source language.
However, when the features of the source language are sufficiently
complicated, it can be highly non-trivial to justify the encoding.

\vspace*{-1.5ex}
\section{The Technique}\label{sec:method}
\vspace*{-1ex}


Our verification technique can be used to check that the potential
execution results of a program satisfy pre-specified conditions.
The potential execution results are estimated by a combination of
concrete computation according to the big-step semantics of the
programming language, and abstract inference according to the
auxiliary information in the specification.
The abstract inference helps realized what is usually accomplished wit
loop invariants in reasoning about loops, and with function contracts
in reasoning about function calls.

\vspace*{-2ex}
\subsection{Specifications}
\vspace*{-.5ex}
Let $\configs$ be the set of \emph{configurations} ranged over by
$\configsMeta$.
Let $\rConfigs$ be the set of \emph{result configurations} ranged over by
$\rConfigsMeta$.
A concrete example for a configuration would be a pair of a program
and a state (which may have its own structure).
A concrete example for a result configuration would be a state.

A specification is a function $\specsMeta \in \configs\to \powerSet{\rConfigs}$.
For a configuration $\configsMeta$, if $\configsMeta$ contains the
complete program to be verified, then $\specsMeta(\configsMeta)$ is
the set capturing the required range for the results of executing the
program from $\configsMeta$.
Otherwise, $\specsMeta(\configsMeta)$ is the expected set of potential
results obtained by executing some statement within the overall
program.
This set provides auxiliary information for the verification.




\vspace*{-2ex}
\subsection{Semantic Derivation and Correctness}\label{ssec:semantic-derivation}
\vspace*{-.5ex}

We model the set of rules of a big-step operational semantics by a
predicate
$\mi{rule}\in (\configs\times\rConfigs)^*\to
(\configs\times\rConfigs)\to\{\TT,\FF\}$.
Each semantic rule is captured as 

\vspace*{-2ex}
\small
\[
  \RULE{\concl{\configsMeta}{\rConfigsMeta}}
  {\premise{\configsMeta_1}{\rConfigsMeta_1},\dots,
  \premise{\configsMeta_n}{\rConfigsMeta_n}} 
\]
\normalsize
Here, the list
$\LST{\premise{\configsMeta_1}{\rConfigsMeta_1}, \dots,
  \premise{\configsMeta_n}{\rConfigsMeta_n}}$ models the list of
premises of the rule, and $\concl{\configsMeta}{\rConfigsMeta}$ models
the conclusion of the rule.
Each premise or conclusion consists of a configuration in the set
$\configs$ and a corresponding result configuration in the set
$\rConfigs$.
%
A side condition in a semantic rule can be captured by a condition on
the parameters $\configsMeta_1$, \dots, $\configsMeta_n$,
$\rConfigsMeta_1$, \dots, $\rConfigsMeta_n$, $\configsMeta$, and
$\rConfigsMeta$, in the concrete definition of $\mi{rule}$.

A semantic derivation concluding that the configuration $\configsMeta$
can be evaluated to the result configuration $\rConfigsMeta$ in the
big-step semantics is captured by

\vspace*{-2ex}
\small
\begin{align*}
  \DERIV{\configsMeta,\rConfigsMeta}
  ~\defEq~
  \exists k\!:
  &\,
    \exists
    \configsMeta_1,\dots,\configsMeta_k\!:
    \exists
    \rConfigsMeta_1,\dots,\rConfigsMeta_k\!: \\
  &\,
    \RULE{\concl{\configsMeta}{\rConfigsMeta}}
    {\premise{\configsMeta_1}{\rConfigsMeta_1},\dots,
    \premise{\configsMeta_k}{\rConfigsMeta_k}}
    \, \land \, 
    \forall i\in \{1,\dots,k\}\!:
    \DERIV{\configsMeta_i,\rConfigsMeta_i} 
\end{align*}
\normalsize
%
Hence, the configuration $\configsMeta$ can be evaluated to the result
configuration $\rConfigsMeta$, or $(\configsMeta,\rConfigsMeta)$ can
be derived in the big-step semantics, if there is a semantic rule with
$(\configsMeta,\rConfigsMeta)$ as conclusion, and each premise of the
rule can itself be derived in the big-step semantics.
Intuitively, if $\DERIV{\configsMeta,\rConfigsMeta}$ can be
established, then there is a finite derivation tree rooted at
$\concl{\configsMeta}{\rConfigsMeta}$.

With the notion of semantic derivation defined above, we formalize the
notion of partial correctness as the validity of specifications.
%
%

\vspace*{-2ex}
\small
\[
  \specValid{\specsMeta}
  ~\defEq~
  \forall \configsMeta, \rConfigsMeta:
  \DERIV{\configsMeta, \rConfigsMeta} 
  \Rightarrow
  \rConfigsMeta\in \specsMeta(\configsMeta) 
\]
\normalsize
A specification $\specsMeta$ is valid, if for each configuration
$\configsMeta$, any result configurations semantically derivable from
$\configsMeta$ is a member of $\specsMeta(\configsMeta)$.

\vspace*{-2ex}
\subsection{Specification-aware Inference and Verification}
\vspace*{-.5ex}

We infer the potential execution results of a configuration under a
given specification $\specsMeta$ according to the following
definition.

\vspace*{-2ex}
\small
\begin{align*}
  \INFER{\configsMeta}{\rConfigsMeta}{\specsMeta}
  ~\defEq~
  &
    \exists k\!:    
    \configsMeta_1,\dots,\configsMeta_k\!:
    \exists \rConfigsMeta_1,\dots,\rConfigsMeta_k\!:\,\\[-.5ex]
  &\quad~~\,
    \RULE{\!\concl{\configsMeta}{\rConfigsMeta}}
    {\premise{\configsMeta_1}{\rConfigsMeta_1},\dots,
    \premise{\configsMeta_k}{\rConfigsMeta_k}} 
    \land \,
    \forall i\!\in\! \{1,\dots,k\}: 
    \rConfigFor{\configsMeta_i}{\rConfigsMeta_i}{\specsMeta}
  \\
  \rConfigFor{\configsMeta}{\rConfigsMeta}{\specsMeta}
  ~\defEq~
  &
    \rConfigsMeta\in \specsMeta(\configsMeta) \,\land\,
    (\specsMeta(\configsMeta)=\rConfigs
    \Rightarrow
    \INFER{\configsMeta}{\rConfigsMeta}{\specsMeta}
    )
\end{align*}
\normalsize
%
The result configuration $\rConfigsMeta$ is infered from the
configuration $\configsMeta$ with the help of the specification
$\specsMeta$, if there is a semantic rule with
$(\configsMeta,\rConfigsMeta)$ as conclusion, and for each premise
$(\configsMeta_i,\rConfigsMeta_i)$ of the semantic rule,
$\rConfigsMeta_i$ is a potential result for $\configsMeta_i$ according
to $\specsMeta$, as is captured by the auxiliary predicate
$\mi{res}^{\specsMeta}$.
The expression
$\rConfigFor{\configsMeta_i}{\rConfigsMeta_i}{\specsMeta}$ says that
the possible candidates for $\rConfigsMeta_i$ are constrained by the
information contained in the specification about $\configsMeta_i$.
In addition, if $\specsMeta$ does not provide any useful information
about $\configsMeta_i$ (i.e., $\specsMeta(\configsMeta_i)=\rConfigs$),
then $\rConfigsMeta_i$ should be inferable from $\configsMeta_i$.

Intuitively, the application of the semantic rules in the inference 
corresponds to the symbolic execution of the target program. 
The information in the specification can be used to overcome 
the inability to symbolically execute the constructs with potentially
unbounded behavior, such as iteration and recursion.

We formulate the 
condition to be verified on
specifications $\specsMeta$ using the predicate $\mi{verif}$.
In other words, $\verifySpec{\specsMeta}$ is the syntactical
correctness condition. 

\vspace*{-2.5ex}
\small
\begin{align*}
  \verifySpec{\specsMeta}
  ~\defEq~&
            \forall \configsMeta, \rConfigsMeta:
            \INFER{\configsMeta}{\rConfigsMeta}{\specsMeta} 
            \Rightarrow \rConfigsMeta\in \specsMeta(\configsMeta)
\end{align*}
\normalsize
A specification $\specsMeta$ is verified, if for each configuration
$\configsMeta$, any result configurations that can be infered from
$\configsMeta$ with the help of $\specsMeta$ are contained in
$\specsMeta(\configsMeta)$.

\vspace*{-2ex}
\subsection{Soundness} \label{ssec:soundness}
\vspace*{-.5ex}


%

%

We prove the implication from $\verifySpec{\specsMeta}$ to 
$\INFER{\configsMeta}{\rConfigsMeta}{\specsMeta}$.
The following lemma is a key component of this proof. 
%
\begin{lemma}\label{lem:sound}
  If $\verifySpec{\specsMeta}$, and 
  $\DERIV{\configsMeta, \rConfigsMeta}$ holds,
  then $\INFER{\configsMeta}{\rConfigsMeta}{\specsMeta}$ holds. 
\end{lemma}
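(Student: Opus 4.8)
The plan is to proceed by induction on the structure of the semantic derivation $\DERIV{\configsMeta,\rConfigsMeta}$, i.e., on the height of the finite derivation tree witnessing it. Unfolding the definition of $\mi{deriv}$, there is some $k$, configurations $\configsMeta_1,\dots,\configsMeta_k$, and result configurations $\rConfigsMeta_1,\dots,\rConfigsMeta_k$ with $\RULE{\concl{\configsMeta}{\rConfigsMeta}}{\premise{\configsMeta_1}{\rConfigsMeta_1},\dots,\premise{\configsMeta_k}{\rConfigsMeta_k}}$ and $\DERIV{\configsMeta_i,\rConfigsMeta_i}$ for every $i\in\{1,\dots,k\}$. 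The induction hypothesis gives $\INFER{\configsMeta_i}{\rConfigsMeta_i}{\specsMeta}$ for each $i$. The goal is to assemble $\INFER{\configsMeta}{\rConfigsMeta}{\specsMeta}$ from these, which by definition requires exhibiting the same rule application together with $\rConfigFor{\configsMeta_i}{\rConfigsMeta_i}{\specsMeta}$ for each premise.

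First I would reuse the very rule instance from the derivation as the rule instance for the inference. What remains is to establish, for each $i$, that $\rConfigFor{\configsMeta_i}{\rConfigsMeta_i}{\specsMeta}$ holds; unfolding its definition, this means (a) $\rConfigsMeta_i\in\specsMeta(\configsMeta_i)$, and (b) if $\specsMeta(\configsMeta_i)=\rConfigs$ then $\INFER{\configsMeta_i}{\rConfigsMeta_i}{\specsMeta}$. Part (b) is immediate from the induction hypothesis. For part (a), I would combine the induction hypothesis $\INFER{\configsMeta_i}{\rConfigsMeta_i}{\specsMeta}$ with the assumption $\verifySpec{\specsMeta}$: the latter says precisely that anything inferable from a configuration lies in its specification set, so $\INFER{\configsMeta_i}{\rConfigsMeta_i}{\specsMeta}$ yields $\rConfigsMeta_i\in\specsMeta(\configsMeta_i)$. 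Having both (a) and (b), $\rConfigFor{\configsMeta_i}{\rConfigsMeta_i}{\specsMeta}$ follows, and plugging these into the unfolded definition of $\mi{infer}$ with the recorded rule instance gives $\INFER{\configsMeta}{\rConfigsMeta}{\specsMeta}$, completing the inductive step.

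The main subtlety — and the only place where care is needed — is the form of the induction. Because $\mi{deriv}$ is an existentially quantified unfolding rather than a named inductive predicate with explicit constructors, one must set up the induction correctly: either as an induction on a natural-number bound on tree height (strengthening the statement to ``for every derivation of height at most $n$\dots'' and inducting on $n$), or, in the mechanization, via the induction principle automatically generated for the inductively defined derivability relation. In the latter case the structural induction is routine; in the former, the strengthening is the only extra bookkeeping, and the base case ($k=0$, an axiom) is handled uniformly by the same argument with an empty list of premises. I do not anticipate any genuine obstacle beyond getting this induction scheme in place; once it is, the step is a direct matching of the bodies of $\mi{deriv}$ and $\mi{infer}$, mediated by $\verifySpec{\specsMeta}$ to bridge the gap between the plain membership demanded by $\mi{res}^{\specsMeta}$ and the inferability provided by the hypothesis.
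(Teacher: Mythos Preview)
Your proposal is correct and follows essentially the same approach as the paper: induction on the derivation tree, applying the induction hypothesis to each premise, and then using $\verifySpec{\specsMeta}$ to obtain the membership conjunct of $\rConfigFor{\configsMeta_i}{\rConfigsMeta_i}{\specsMeta}$. The only cosmetic difference is that the paper presents the verification of $\rConfigFor{\configsMeta_i}{\rConfigsMeta_i}{\specsMeta}$ via an explicit case split on whether $\specsMeta(\configsMeta_i)=\rConfigs$, whereas you establish both conjuncts uniformly; your presentation is slightly more streamlined but logically identical.
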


\begin{proof}
  According to the definition of
  $\DERIV{\configsMeta, \rConfigsMeta}$, if this predicate holds, then
  there is a finite derivation tree generated by the following
  inference rule.

  \vspace*{-2ex}
  \small
  \[
    \Inference
    {
      \DERIV{\configsMeta_1,\rConfigsMeta_1}
      \quad \dots \quad 
      \DERIV{\configsMeta_m,\rConfigsMeta_m}
      \qquad\qquad
      \RULE
      {\concl{\configsMeta}{\rConfigsMeta}} 
      {
        \premise{\configsMeta_1}{\rConfigsMeta_1}, \dots,
        \premise{\configsMeta_m}{\rConfigsMeta_m} 
      }
    }
    {
      \DERIV{\configsMeta,\rConfigsMeta}
    }
  \]
  \normalsize
  The proof is by induction on the derivation tree for
  $\DERIV{\configsMeta,\rConfigsMeta}$.
  
  From $\DERIV{\configsMeta, \rConfigsMeta}$,
  we have $\DERIV{\configsMeta_1,\rConfigsMeta_1}$, \dots,
  $\DERIV{\configsMeta_m,\rConfigsMeta_m}$, and

  \vspace*{-2ex}
  \small
  \begin{equation}\label{eq:rule_m}
    \RULE
      {\concl{\configsMeta}{\rConfigsMeta}} 
      {
      \premise{\configsMeta_1}{\rConfigsMeta_1}, \dots,
      \premise{\configsMeta_m}{\rConfigsMeta_m} 
    }
  \end{equation}
  \normalsize
  for some $m$, $\configsMeta_1$, \dots, $\configsMeta_m$,
  $\rConfigsMeta_1$, \dots, $\rConfigsMeta_m$.

  For each $i\in \{1,\dots,m\}$, we have 
  $\INFER{\configsMeta_i}{\rConfigsMeta_i}{\specsMeta}$ 
  from $\DERIV{\configsMeta_i,\rConfigsMeta_i}$ and the 
  induction hypothesis. 
  We show that
  $\rConfigFor{\configsMeta_i}{\rConfigsMeta_i}{\specsMeta}$ holds by
  distinguishing between the cases where
  $\specsMeta(\configsMeta_i)=\rConfigs$ and where
  $\specsMeta(\configsMeta_i)\neq\rConfigs$.
  \begin{itemize}
  \item Suppose $\specsMeta(\configsMeta_i)=\rConfigs$.
    Then, it holds that $\rConfigsMeta\in \specsMeta(\configsMeta_i)$.
    Hence, we have $\rConfigFor{\configsMeta_i}{\rConfigsMeta_i}{\specsMeta}$
    because of $\INFER{\configsMeta_i}{\rConfigsMeta_i}{\specsMeta}$,
    and the definition of $\mi{res}^\specsMeta$. 
    \smallskip\smallskip
  \item Suppose $\specsMeta(\configsMeta_i)\neq\rConfigs$.
    From $\INFER{\configsMeta_i}{\rConfigsMeta_i}{\specsMeta}$, and
    $\verifySpec{\specsMeta}$, we have
    $\rConfigsMeta_i\in \specsMeta(\configsMeta_i)$.
    Hence, we have
    $\rConfigFor{\configsMeta_i}{\rConfigsMeta_i}{\specsMeta}$
    according to the definition of $\mi{res}^{\specsMeta}$. 
  \end{itemize}
  Hence, for each $i\in \{1,\dots,m\}$, we have
  $\rConfigFor{\configsMeta_i}{\rConfigsMeta_i}{\specsMeta}$.
  Thus, we can deduce
  $\INFER{\configsMeta}{\rConfigsMeta}{\specsMeta}$ using  
  \cref{eq:rule_m} and the definition of $\mi{infer}^\specsMeta$. 
  This completes the proof. 
  \qed 
\end{proof}

Using this lemma, the soundness theorem can be obtained directly.

\vspace*{-.5ex}
\begin{theorem}[Soundness]\label{thm:soundness}
  If $\verifySpec{\specsMeta}$ can be established, then
  $\specValid{\specsMeta}$ holds. 
\end{theorem}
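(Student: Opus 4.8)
The plan is to derive the Soundness theorem as an immediate consequence of \cref{lem:sound}. First I would unfold the definition of $\specValid{\specsMeta}$: it suffices to take arbitrary $\configsMeta$ and $\rConfigsMeta$, assume $\DERIV{\configsMeta,\rConfigsMeta}$, and establish $\rConfigsMeta\in\specsMeta(\configsMeta)$. Given the hypothesis $\verifySpec{\specsMeta}$ of the theorem together with the assumed $\DERIV{\configsMeta,\rConfigsMeta}$, \cref{lem:sound} yields $\INFER{\configsMeta}{\rConfigsMeta}{\specsMeta}$.

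Then I would apply the definition of $\verifySpec{\specsMeta}$ directly: since $\verifySpec{\specsMeta}$ says that for all $\configsMeta,\rConfigsMeta$ the implication $\INFER{\configsMeta}{\rConfigsMeta}{\specsMeta}\Rightarrow \rConfigsMeta\in\specsMeta(\configsMeta)$ holds, instantiating it at the current $\configsMeta$ and $\rConfigsMeta$ and feeding in the $\INFER{\configsMeta}{\rConfigsMeta}{\specsMeta}$ obtained from the lemma gives precisely $\rConfigsMeta\in\specsMeta(\configsMeta)$. Discharging the universal quantifiers over $\configsMeta$ and $\rConfigsMeta$ that were introduced at the start closes the goal.

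There is essentially no obstacle here: the real work has already been done in \cref{lem:sound}, whose proof proceeds by induction on the finite derivation tree witnessing $\DERIV{\configsMeta,\rConfigsMeta}$ and a case split on whether $\specsMeta(\configsMeta_i)=\rConfigs$. The theorem itself is a two-step composition — lemma followed by the definition of $\mi{verif}$ — with no induction and no side conditions to check. If anything required care, it would only be making sure the quantifier structure of $\verifySpec{\specsMeta}$ is instantiated at the same pair $(\configsMeta,\rConfigsMeta)$ for which the derivation is assumed, which is immediate.
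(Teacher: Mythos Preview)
Your proposal is correct and matches the paper's proof essentially line for line: assume $\verifySpec{\specsMeta}$ and $\DERIV{\configsMeta,\rConfigsMeta}$, invoke \cref{lem:sound} to obtain $\INFER{\configsMeta}{\rConfigsMeta}{\specsMeta}$, and then apply the definition of $\verifySpec{\specsMeta}$ to conclude $\rConfigsMeta\in\specsMeta(\configsMeta)$. There is nothing to add.
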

\vspace*{-3ex}
\begin{proof}
  Assume $\verifySpec{\specsMeta}$ and
  $\DERIV{\configsMeta,\rConfigsMeta}$.
  Then, we have $\INFER{\configsMeta}{\rConfigsMeta}{\specsMeta}$
  according to \cref{lem:sound}.
  Thus, we can deduce $\rConfigsMeta\in\specsMeta(\configsMeta)$ using
  $\verifySpec{\specsMeta}$.
  \qed
\end{proof}
The application of this theorem reliably turns the problem of
establishing the validity of a specification $\specsMeta$ into the
problem of proving $\verifySpec{\specsMeta}$, 
irrespective of the language used for the program that is specified in
$\specsMeta$.


\vspace*{-1.5ex}
\begin{remark}
  \cref{lem:sound} suggests that an abstract form of computation is
  obtained leveraging user specification that is verified.
  In abstract interpretation~\cite{CousotCousot1977}, the focus is
  often to calculate the abstract form of the computation performed by
  each kind of program statement, to support automated program
  analysis.
  On the other hand, our focus is to leverage user specification for
  specific concrete statements in a program, to provide hints in a
  deductive program verification. 
  %
\end{remark}

\vspace*{-3ex}
\section{Illustrative Example}\label{sec:illustrative-example}
\vspace*{-1ex}

In this section, we illustrate our technique using a toy example. 
In this example, a program computing the factorial of a natural number is written in the While language~\cite{NielsonNielson2007}. 
We show how the big-step semantics of the While language can be
formulated with the $\mi{rule}$ predicate introduced in
\cref{ssec:semantic-derivation}.
We then show how the functional correctness of the factorial program can be specified and proven.

\vspace*{-2ex}
\subsection{Big-step Semantics of the While Language}
\vspace*{-1ex}

The While language consists of arithmetic expressions $a$,
Boolean expressions $b$, and statements $\stmtsMeta\in \stmts$.
A statement can be $\SKIP$ that performs no operation, an assignment
$\ASSG{\varsMeta}{a}$, a sequential composition
$\stmtsMeta_1;\stmtsMeta_2$, a branching statement
$\IF{b}{\stmtsMeta_1}{\stmtsMeta_2}$, or a loop
$\WHILE{b}{\stmtsMeta}$.

For programs in the While language, the states $\whStatesMeta$ are
elements of $\whStates\defEq \vars\to\ints$.
Here, $\vars$ is the set of variables and $\ints$ is the set of
integers.
The evaluation of arithmetic expressions and Boolean expressions in
states can be formalized by defining the evaluation functions
$\mathcal{A}$ and $\mathcal{B}$, respectively, as in
\cite{NielsonNielson2007}. 
The set $\configs$ of configurations is $\stmts\times \whStates$.
The set $\rConfigs$ of result configurations is $\whStates$.
We formulate the big-step semantics by defining the predicate
$\mi{rule}$, as in \cref{fig:sem-rules-while-language}.
In each line, a combination of the parameter values for which
$\mi{rule}$ holds is given.

%

\begin{figure}[h]
\small
\[
  \begin{array}{l}
    {~}\\[-7ex]
    \RULE{\concl{\whConfig{\SKIP}{\whStatesMeta}}{\whStatesMeta}}{\,}
    \\\\[-2ex]
    \RULE{\concl{\whConfig{\ASSG{x}{a}}{\whStatesMeta}}
    {\whStatesMeta[x\mapsto \whAEval{a}\whStatesMeta]}}{\,}
    \\\\[-2ex]
    \RULE{\concl{\whConfig{\SEQ{\stmtsMeta_1}{\stmtsMeta_2}}{\whStatesMeta}}
    {\whStatesMeta'}}
    {\premise{\whConfig{\stmtsMeta_1}{\whStatesMeta}}{\whStatesMeta''},
    \premise{\whConfig{\stmtsMeta_2}{\whStatesMeta''}}{\whStatesMeta'}}
    \\\\[-2ex]
    \RULE{\concl{\whConfig{\IF{b}{\stmtsMeta_1}{\stmtsMeta_2}}{\whStatesMeta}}
    {\whStatesMeta'}}
    {\premise{\whConfig{\stmtsMeta_1}{\whStatesMeta}}{\whStatesMeta'}}
    \quad \mrm{if}~\whBEval{b}\whStatesMeta=\TT
    \\\\[-2ex]
    \RULE{\concl{\whConfig{\IF{b}{\stmtsMeta_1}{\stmtsMeta_2}}{\whStatesMeta}}
    {\whStatesMeta'}}
    {\premise{\whConfig{\stmtsMeta_2}{\whStatesMeta}}{\whStatesMeta'}}
    \quad \mrm{if}~\whBEval{b}\whStatesMeta=\FF
    \\\\[-2ex]
    \RULE{\concl{\whConfig{\WHILE{b}{\stmtsMeta}}{\whStatesMeta}}
    {\whStatesMeta'}}
    {\premise{\whConfig{\stmtsMeta}{\whStatesMeta}}{\whStatesMeta''},
    \premise{\whConfig{\WHILE{b}{\stmtsMeta}}{\whStatesMeta''}}{\whStatesMeta'}}
    ~\mrm{if}~\whBEval{b}\whStatesMeta=\TT
    \\\\[-2ex]
    \RULE
    {\concl{\whConfig{\WHILE{b}{\stmtsMeta}}{\whStatesMeta}}{\whStatesMeta}}{\,}
    ~\mrm{if}~\whBEval{b}\whStatesMeta=\FF
  \end{array}
\]
\vspace*{-2ex}
\caption{The semantic rules for the statements of the While language}
\label{fig:sem-rules-while-language}
\vspace*{-3ex}
\normalsize
\end{figure}

\vspace*{-1.5ex}
\subsection{Factorial Program and its Specification}
\vspace*{-.5ex}

Consider the program $\stmtFac$ in the While language.
The program computes the factorial $\FACT{m}$ where $m$ is the initial
value of the program variable $\pgVar{m}$.

\vspace*{-3ex}
\small
\begin{align*}
  \stmtFac~\defEq~
  & (\ASSG{\pgVar{fac}}{\pgVar{m}}; \stmtWh) \\
  \stmtWh ~\defEq~
  &
    (\mathsf{while}~1<\pgVar{m}~\mathsf{do}~
    (\ASSG{\pgVar{m}}{\pgVar{m}-\pgVar{1}}; 
    \ASSG{\pgVar{fac}}{\pgVar{fac}*\pgVar{m}}))
\end{align*}
\normalsize
\vspace*{-3ex}

Let $\Pm{m}$ be the set of states where $\pgVar{fac}$ has the value $\FACT{m}$.
Let $\Pmfac{m}{\mi{fac}}$ be the set of states where $\pgVar{fac}$ has
the value $\mi{fac}*\FACT{(m-1)}$. 

\vspace*{-3ex}
\small
\begin{align*}
  \Pm{m}~\defEq~
  &
  \{\whStatesMeta'[\pgVar{fac}\mapsto \FACT{m}]\mid
    \whStatesMeta'\in\whStates\}  \\
  \Pmfac{m}{\mi{fac}}~\defEq~
  &
  \{
  \whStatesMeta'[\pgVar{fac}\mapsto \mi{fac}*\FACT{(m-1)}]
  \mid
  \whStatesMeta'\in\whStates 
  \}
\end{align*}
\normalsize
\vspace*{-3ex}

We consider the following specification for the program.

\vspace*{-3ex}
\small
\[
  \begin{array}{rll}
    \specFac(\whConfig{\stmtFac}{\whStatesMeta})
    ~\defEq~
    &
      \Pm{m} 
      \quad
    &
      \mrm{if}~
      m = \whStatesMeta(\pgVar{m})\,\land\,
      m > 0\,\land\,
      \whStatesMeta \in \whStates
    \\\\[-2ex]
    \specFac(\whConfig{\stmtWh}{\whStatesMeta}) 
    ~\defEq~
    &
      \Pmfac{m}{\mi{fac}} 
      \quad
    &
      \mrm{if}~
      m = \whStatesMeta(\pgVar{m})\,\land\,
      m > 0\,\land\, 
      \mi{fac} = \whStatesMeta(\pgVar{fac})\,\land\,
      \whStatesMeta \in \whStates
    \\\\[-2ex]
    \specFac(\configsMeta)
    ~\defEq~
    &
      \whStates
      \quad
    &
      \mrm{if}~\configsMeta~\mrm{is~not~of~the~above~forms}
  \end{array}
\]
\normalsize
The specification says that when $\stmtFac$ finishes execution started
in a state where the value of $\pgVar{m}$ is $m>0$, the value of
$\pgVar{fac}$ will be $\FACT{m}$.
%
The specification also contains the auxiliary claim that when the loop
$\stmtWh$ finishes execution started in a state where $\pgVar{fac}$
has the value $\mi{fac}$ and $\pgVar{m}$ has the value $m>0$, the
value of $\pgVar{fac}$ will be equal to the product of $\mi{fac}$ and
$\FACT{(m-1)}$ (noting that $\FACT{0}=1$).


\vspace*{-1.5ex}
\subsection{Proof of the Factorial Program}
\vspace*{-.5ex}

A direct proof of the factorial program $\stmtFac$ based on the
big-step operational semantics of the While language would require an
induction on the shape of derivation trees
(e.g.,~\cite{NielsonNielson2007}) to establish a suitable invariant
for the loop $\stmtWh$.

Using the technique of \cref{sec:method}, we aim at establishing
$\specValid{\specFac}$.
With \cref{thm:soundness}, it suffices to show
$\verifySpec{\specFac}$ -- 
for all $\configsMeta$ and $\rConfigsMeta$, assuming
$\INFER{\configsMeta}{\rConfigsMeta}{\specFac}$, we attempt to show
$\rConfigsMeta\in\specFac(\configsMeta)$.

\begin{enumerate} 
\item Firstly, assume $\configsMeta$ is
  $\whConfig{\stmtFac}{\whStatesMeta}$, where 
  $\pgVarVal{\whStatesMeta}{\pgVar{m}}>0$.
  Then, $\specFac(\configsMeta)$ is $\Pm{m}$, where
  $m=\pgVarVal{\whStatesMeta}{\pgVar{m}}$.
  Using
  $\INFER{\whConfig{\stmtFac}{\whStatesMeta}}
  {\rConfigsMeta}{\specFac}$ and the semantics of the While language
  in \cref{fig:sem-rules-while-language}, it is not difficult to
  obtain
  
  \vspace*{-2ex}
  \small
  \[
    \RULE {(\whConfig{\stmtFac}{\whStatesMeta},\rConfigsMeta)}
    {(\whConfig{\ASSG{\pgVar{fac}}{\pgVar{m}}}{\whStatesMeta},\whStatesMeta''),
      (\whConfig{\stmtWh}{\whStatesMeta''},\rConfigsMeta)}
  \]
  \normalsize
  for some $\whStatesMeta''$ such that
  $\rConfigFor
  {\whConfig{\ASSG{\pgVar{fac}}{\pgVar{m}}}{\whStatesMeta}}
  {\whStatesMeta''}{\specFac}$ and
  $\rConfigFor{\whConfig{\stmtWh}{\whStatesMeta''}}{\rConfigsMeta}
  {\specFac}$.
  Since
  $\specFac(\whConfig{\ASSG{\pgVar{fac}}{\pgVar{m}}}{\whStatesMeta})=\rConfigs$,
  we deduce 
  $\INFER{\whConfig{\ASSG{\pgVar{fac}}{\pgVar{m}}}{\whStatesMeta}}
  {\whStatesMeta''}{\specFac}$ from
  $\rConfigFor
  {\whConfig{\ASSG{\pgVar{fac}}{\pgVar{m}}}{\whStatesMeta}}
  {\whStatesMeta''}{\specFac}$.
  Hence, we deduce
  $\whStatesMeta''=\whStatesMeta[\pgVar{fac}\mapsto
  \pgVarVal{\whStatesMeta}{\pgVar{m}}]$. 
  Hence, we have
  $\pgVarVal{\whStatesMeta''}{\pgVar{m}}=\pgVarVal{\whStatesMeta}{\pgVar{m}}>0$.
  Hence, we have 
  $\specFac(\whConfig{\stmtWh}{\whStatesMeta''})= \Pmfac
  {\pgVarVal{\whStatesMeta''}{\pgVar{m}}}
  {\pgVarVal{\whStatesMeta''}{\pgVar{fac}}}
  =\{\whStatesMeta'[\pgVar{fac}\mapsto
  \pgVarVal{\whStatesMeta''}{\pgVar{fac}}*
  \FACT{(\pgVarVal{\whStatesMeta''}{m}-1)}]\mid
  \whStatesMeta'\in\whStates\}=
  \{\whStatesMeta'[\pgVar{fac}\mapsto
  \FACT{\pgVarVal{\whStatesMeta}{\pgVar{m}}}] \mid
  \whStatesMeta'\in\whStates\}=
  \setsOfRConfigsMeta_m$.
  Moreover, from 
  $\rConfigFor{\whConfig{\stmtWh}{\whStatesMeta''}}{\rConfigsMeta}
  {\specFac}$
  we have $\rConfigsMeta\in \Pmfac
  {\pgVarVal{\whStatesMeta''}{\pgVar{m}}}
  {\pgVarVal{\whStatesMeta''}{\pgVar{fac}}}$. 
  Ultimately, we have $\rConfigsMeta\in \setsOfRConfigsMeta_{m}$.

  \smallskip\smallskip
\item Secondly, assume $\configsMeta$ is
  $\whConfig{\stmtWh}{\whStatesMeta}$, where
  $\pgVarVal{\whStatesMeta}{\mVar}>0$. 
  Then, $\specFac(\configsMeta)$ is $\Pmfac{m}{\mi{fac}}$, where
  $m=\pgVarVal{\whStatesMeta}{\pgVar{m}}$, and
  $\mi{fac}=\pgVarVal{\whStatesMeta}{\pgVar{fac}}$.
  Using $\INFER{\whConfig{\stmtWh}{\whStatesMeta}}
  {\rConfigsMeta}{\specFac}$ and the semantics of the While language
  in \cref{fig:sem-rules-while-language}, we have the following two cases.
  
  \smallskip\smallskip
  \begin{enumerate}
  \item We have $m\le 1$, 
    $\RULE{(\whConfig{\stmtWh}{\whStatesMeta},\whStatesMeta)}{~}$, and
    $\rConfigsMeta=\whStatesMeta$. 
    Since $m>0$ and $m\le 1$, we have $m=1$.
    Hence, it is not difficult to deduce
    $\rConfigsMeta\in \setsOfRConfigsMeta'_{m,\mi{fac}}$.
    \smallskip\smallskip
  \item We have $m>1$, and 
  
    \vspace*{-2ex}
    \small
    \[
      \RULE{(\whConfig{\stmtWh}{\whStatesMeta},\rConfigsMeta)}
      {(\whConfig{\ASSG{\pgVar{m}}{\pgVar{m}-\pgVar{1}}; 
          \ASSG{\pgVar{fac}}{\pgVar{fac}*\pgVar{m}}}{\whStatesMeta},
        \whStatesMeta''),
        (\whConfig{\stmtWh}{\whStatesMeta''}, \rConfigsMeta)} 
    \]
    \normalsize
    for some $\whStatesMeta''$ such that
    $\rConfigFor{\whConfig{
        \ASSG{\pgVar{m}}{\pgVar{m}-\pgVar{1}}; 
        \ASSG{\pgVar{fac}}{\pgVar{fac}*\pgVar{m}}
      }{\whStatesMeta}}{\whStatesMeta''}
    {\specFac}$ and\\
    $\rConfigFor
    {\whConfig{\stmtWh}{\whStatesMeta''}}{\rConfigsMeta}
    {\specFac}$.
    From the former we have 
    
    \vspace*{-2ex}
    \small
    \[\INFER{\whConfig{
        \ASSG{\pgVar{m}}{\pgVar{m}-\pgVar{1}}; 
        \ASSG{\pgVar{fac}}{\pgVar{fac}*\pgVar{m}}}{\whStatesMeta}}
        {\whStatesMeta''}{\specFac}
    \]
    \normalsize
    The specification $\specFac$ provides no information about the two
    assignments, $\ASSG{\pgVar{m}}{\pgVar{m}-\pgVar{1}}$ and
    $\ASSG{\pgVar{fac}}{\pgVar{fac}*\pgVar{m}}$.
    Hence, $\mi{infer}^{\specFac}$ applies also to these two
    assignments, it can be deduced that
    $\whStatesMeta''=\whStatesMeta [\pgVar{m}\mapsto m-1,
    \pgVar{fac}\mapsto \mi{fac}*(m-1)]$.
    Hence, we have $\pgVarVal{\whStatesMeta''}{\pgVar{m}}=m-1>0$. 
    Hence, $\specFac(\whConfig{\stmtWh}{\whStatesMeta''})=
    \Pmfac
    {\pgVarVal{\whStatesMeta''}{\pgVar{m}}}
    {\pgVarVal{\whStatesMeta''}{\pgVar{fac}}}=
    \{
    \whStatesMeta'[\pgVar{fac}\mapsto
    (\mi{fac}*(m-1))*\FACT{(m-1-1)}] \mid
    \whStatesMeta'\in\whStates 
    \}=
    \Pmfac{m}{\mi{fac}}$.
    Moreover, from
    $\rConfigFor {\whConfig{\stmtWh}{\whStatesMeta''}}{\rConfigsMeta}
    {\specFac}$ we have
    $\rConfigsMeta\in
    \setsOfRConfigsMeta'_{\pgVarVal{\whStatesMeta''}{\pgVar{m}},
      \pgVarVal{\whStatesMeta''}{\pgVar{fac}}}$.
    Ultimately, we have
    $\rConfigsMeta\in \setsOfRConfigsMeta'_{m,\mi{fac}}$.
  \end{enumerate}
\end{enumerate}
In the other cases, we have $\specFac(\configsMeta)=\rConfigs$.
Hence, it trivially holds that
$\rConfigsMeta\in \specFac(\configsMeta)$
The proof is thus complete. 
\qed

The above proof of the factorial program does not require the use of
induction.
Essentially, the induction required for the loop is already
encapsulated in the proof of \cref{thm:soundness}.

\vspace*{-2ex}
\section{Verification of Iterative and Recursive Programs}\label{sec:verification}
\vspace*{-1ex}

In this section, we evaluate our technique with two further examples. 
In the two examples, programming languages of the imperative and
functional paradigms are used, respectively, to implement the
functionality of merging two sorted lists of integers into a single
sorted list of integers.
%

\vspace*{-1.5ex}
\subsection{Extended While Language and Array-Merging Program}\label{sec:ext-while}

\subsubsection*{Extended While Language}

The programming language of this section is an extension of the While
language. 
This extension contains the extra features of one-dimensional arrays
and functions.

\vspace*{-2ex}
\paragraph{Syntax.}
We give the syntax for arithmetic expressions $\aExpsMeta$, Boolean
expressions $\bExpsMeta$, and statements $\stmtsMeta$.
We explain the constructs present in the extension only. 
%

\vspace*{-2.5ex} 
\small
\begin{align*}
  \aExpsMeta ~\synDefEq~
    &
      \numeralsMeta \mid \varsMeta \mid
      \arrsMeta \mid \arrRef{\arrsMeta}{\aExpsMeta}
      \mid \addExpr{\aExpsMeta}{\aExpsMeta}
      \mid \subExpr{\aExpsMeta}{\aExpsMeta}
      \mid \mulExpr{\aExpsMeta}{\aExpsMeta}
      \mid \divExpr{\aExpsMeta}{\aExpsMeta}
  \\
  \bExpsMeta ~\synDefEq~
    &
      \trueLit \mid \falseLit \mid 
      \aExpsMeta=\aExpsMeta \mid
      \aExpsMeta < \aExpsMeta \mid 
      \andExpr{\bExpsMeta}{\bExpsMeta} \mid
      \notExpr{\bExpsMeta} \\ 
  \stmtsMeta ~\synDefEq~
    &
      \varDecl{\varsMeta} \mid
      \arrDecl{\arrsMeta}{\numeralsMeta} \mid
      \ASSG{\varsMeta}{\aExpsMeta} \mid
      \ASSG{\arrRef{\arrsMeta}{\aExpsMeta}}{\aExpsMeta} \mid
      \SKIP \mid      
  \\
    &
      \IF{\bExpsMeta}{\stmtsMeta}{\stmtsMeta} \mid
      \WHILE{\bExpsMeta}{\stmtsMeta} \mid
      \SEQ{\stmtsMeta}{\stmtsMeta} \mid
      \funCall 
      {\fidsMeta}
      {\aExpsMeta,\dots,\aExpsMeta}
      {\varsMeta,\dots,\varsMeta} 
\end{align*}
\normalsize
Here, $\arrsMeta$ is an array identifier,
and $\arrRef{\arrsMeta}{\aExpsMeta_1}$ is the expression
used to retrieve the element of the array $\arrsMeta$ at the index
$\aExpsMeta_1$.
In addition, $\varDecl{\varsMeta}$ is the declaration of the variable
$\varsMeta$,
$\arrDecl{\arrsMeta}{\numeralsMeta}$ is the declaration of the array
with identifier $\varsMeta$ and size $\numeralsMeta$,
$\ASSG{\arrRef{\arrsMeta}{\aExpsMeta_1}}{\aExpsMeta_2}$ is an
assignment of the result of $\aExpsMeta_2$ to the element of the array
$\arrsMeta$ indexed at $\aExpsMeta_1$, and
$\funCall {\fidsMeta} {\aExpsMeta_1,\dots,\aExpsMeta_m}
{\varsMeta_1,\dots,\varsMeta_n} $ is a call to the function with
identifier $\fidsMeta$ with arguments $\aExpsMeta_1$, \dots,
$\aExpsMeta_m$ and return variables $\varsMeta_1$, \dots,
$\varsMeta_n$.
If some argument $\aExpsMeta_i$ is an array, then it is passed by
reference in the call.


A \emph{program} in the extended While language is a mapping $\progsMeta$
from each function identifier $\fidsMeta$ to a triple
$(\LST{\pdsMeta_1,\dots,\pdsMeta_m},
\LST{\varsMeta_1,\dots,\varsMeta_n}, \stmtsMeta)$ or $\bot$.
Here, each $\pdsMeta_i$ ($i\in \{1,\dots,m\}$) is a parameter of the
function that is either a variable $\varsMeta$ or an array
$\arrsMeta$.
Each $\varsMeta_i$ ($i\in\{1,\dots,n\}$) is a return variable of the
function.
The $\stmtsMeta$ is the statement of the function.
If $\progsMeta(\fidsMeta)=\bot$, then there is no function defined for
the function identifier in the program.

\vspace*{-1.5ex}
\paragraph{Semantics. }
A \emph{state} $\ewhStatesMeta$ is a pair
$\ewhState{\storesMeta}{\locBoundsMeta}$.
Here,
$\storesMeta\in (\vars\cup\arrs\to
\optional{\ints})\cup(\ints\to\ints)$ is a \emph{store} that maps each
variable to an optional integer that is the value of the variable,
maps each array name to an optional integer representing the starting
location of the array, and maps each location to an integer that is
the value stored at the location.
In addition, $\locBoundsMeta\in \ints$ is the \emph{next fresh
  location} that can be used as the starting location of an array.
For a state $\ewhStatesMeta=\ewhState{\storesMeta}{\locBoundsMeta}$,
we write $\component{\ewhStatesMeta}{\storesMeta}$ to refer to
the store $\storesMeta$,
write $\component{\ewhStatesMeta}{\locBoundsMeta}$ to refer to
the next fresh location $\locBoundsMeta$ for arrays,
write $\ewhStatesMeta(a)$ for $\storesMeta(a)$, and 
write $\ewhStatesMeta[a\mapsto b]$ for $\storesMeta[a\mapsto b]$.

\begin{figure}[t]
  \small
  \begin{align*}
    &
      \RULE
      {(\ewhConfig{\SKIP}{\ewhStatesMeta}{\progsMeta},
      \ewhStatesMeta)}  
      {\,} \\
    &
      \RULE
      {(\ewhConfig{\varDecl{\varsMeta}}{\ewhStatesMeta} 
      {\progsMeta},
      \storeUpd{\ewhStatesMeta}{\varsMeta}{0}
      )} 
      {\, } \quad
      \mrm{if}~\ewhStatesMeta(\varsMeta)=\bot
    \\
    &
      \RULE
      {(\ewhConfig{\arrDecl{\arrsMeta}{\numeralsMeta}}
      {\ewhState{\storesMeta}{\locBoundsMeta}}{\progsMeta}, 
      \ewhRConfig
      {\storesMeta
      [\arrsMeta\mapsto \locBoundsMeta]} 
      {\locBoundsMeta+ 
      \ewhNEval{\numeralsMeta}})  
      }
      {\, }
      \quad \mrm{if}~\ewhStatesMeta(\arrsMeta)=\bot \\
    &
      \RULE
      {
      (
      \ewhConfig{\ASSG{\varsMeta}{\aExpsMeta}}
      {\ewhStatesMeta}{\progsMeta},
      \storeUpd{\ewhStatesMeta}{\varsMeta}{\intsMeta}
      ) 
      }
      {\,}
      \quad
      \mrm{if}~
      \ewhAEval{\aExpsMeta}\ewhStatesMeta=\intsMeta \land
      \ewhStatesMeta(\varsMeta)\neq\bot
    \\
    &
      \RULE
      {
      (\ewhConfig{\ASSG{\arrRef{\arrsMeta}{\aExpsMeta_1}}{\aExpsMeta_2}}
      {\ewhStatesMeta}{\progsMeta},
      \ewhStatesMeta') 
      }
      {\,}
    \\
    & \qquad
      \mrm{if}~
      \exists \locsMeta: \exists \intsMeta_1,\intsMeta_2:
      (\ewhStatesMeta(\arrsMeta)=\locsMeta 
      \,\land\,
      \ewhAEval{\aExpsMeta_1}\ewhStatesMeta
      =\intsMeta_1\,\land\, \intsMeta_1 \ge 0 \,\land\,
      \ewhAEval{\aExpsMeta_2}\ewhStatesMeta 
      =\intsMeta_2 \,\land\, \\[-.5ex]
    & \qquad\qquad\qquad\qquad ~~\,
      \locsMeta+\intsMeta_1<\component{\ewhStatesMeta}{\locBoundsMeta} \land
      \ewhStatesMeta'=
      \heapUpd{\ewhStatesMeta}{\locsMeta+\intsMeta_1}
      {\intsMeta_2} )
      \\\\[-3ex]
    &
      \RULE
      {(\ewhConfig{\IF{\bExpsMeta}{\stmtsMeta_1}{\stmtsMeta_2}}
      {\ewhStatesMeta}{\progsMeta},
      \ewhStatesMeta')} 
      {(\ewhConfig{\stmtsMeta_1}{\ewhStatesMeta}
      {\progsMeta}, \ewhStatesMeta')} 
      \quad \mrm{if}~
      \ewhBEval{\bExpsMeta}\ewhStatesMeta=\TT  \\
    &
      \RULE
      {(\ewhConfig{\IF{\bExpsMeta}{\stmtsMeta_1}{\stmtsMeta_2}}
      {\ewhStatesMeta}{\progsMeta}, \ewhStatesMeta')}
      {(\ewhConfig{\stmtsMeta_2}{\ewhStatesMeta}{\progsMeta}, 
      \ewhStatesMeta')} 
      \quad \mrm{if}~
      \ewhBEval{\bExpsMeta}\ewhStatesMeta=\FF \\
    &
      \RULE
      {
      (\ewhConfig{\WHILE{\bExpsMeta}{\stmtsMeta}}
      {\ewhStatesMeta}{\progsMeta},
      \ewhStatesMeta')
      }
      {
      (\ewhConfig{\stmtsMeta}{\ewhStatesMeta}{\progsMeta},
      \ewhStatesMeta''),
      (\ewhConfig{\WHILE{\bExpsMeta}{\stmtsMeta}}
      {\ewhStatesMeta''}{\progsMeta},
      \ewhStatesMeta') 
      } 
    ~~ \mrm{if}~
      \ewhBEval{\bExpsMeta}\ewhStatesMeta=\TT
    \\
    &
      \RULE
      {
      (\ewhConfig{\WHILE{\bExpsMeta}{\stmtsMeta}}
      {\ewhStatesMeta}{\progsMeta},
      \ewhStatesMeta) 
      }
      {\,}
      \quad \mrm{if}~
      \ewhBEval{\bExpsMeta}\ewhStatesMeta=\FF 
    \\
    &
      \RULE
      {
      (\ewhConfig{\SEQ{\stmtsMeta_1}{\stmtsMeta_2}}
      {\ewhStatesMeta}{\progsMeta}, \ewhStatesMeta') 
      }
      {
      (\ewhConfig{\stmtsMeta_1}{\ewhStatesMeta}{\progsMeta},
      \ewhStatesMeta''),
      (\ewhConfig{\stmtsMeta_2}{\ewhStatesMeta''}{\progsMeta},
      \ewhStatesMeta') 
      }
    \\
    &
      \RULE
      {
      (\ewhConfig
      {\funCall{\fidsMeta}{\aExpsMeta_1,\dots,\aExpsMeta_m}
      {\varsMeta'_1,\dots,\varsMeta'_n}}
      {\ewhStatesMeta}{\progsMeta},
      \ewhState{\storesMeta'}{\component{\ewhStatesMeta}{\locBoundsMeta}}) 
      }
      {
      (\ewhConfig{\stmtsMeta}{\ewhState{\storesMeta''}
      {\component{\ewhStatesMeta}{\locBoundsMeta}}}{\progsMeta},
      \ewhStatesMeta'_0)
      } \\
    &
      \qquad
      \mrm{if}~
      \exists \valsMeta_1,\dots,\valsMeta_m: \\
    &\qquad\qquad 
      \progsMeta(\fidsMeta)=
      (\LST{\pdsMeta_1,\dots,\pdsMeta_m},  
      \LST{\varsMeta_1,\dots,\varsMeta_n},
      \stmtsMeta)
      \,\land\,
      \ewhAEval{\aExpsMeta_1}\ewhStatesMeta=\valsMeta_1 \land
      \dots \land 
      \ewhAEval{\aExpsMeta_m}\ewhStatesMeta=\valsMeta_m \,\land\,
    \\
    &\qquad\qquad
      \callIniSt{\component{\ewhStatesMeta}{\storesMeta}} 
      {\LST{\pdsMeta_1,\dots,\pdsMeta_m}} 
      {\LST{\valsMeta_1,\dots,\valsMeta_m}}
      {\LST{\varsMeta_1,\dots,\varsMeta_n}}= \storesMeta''
      \,\land\, 
      \\
    &\qquad\qquad
      \callFinSt
      {\component{\ewhStatesMeta}{\storesMeta}}
      {\component{\ewhStatesMeta'_0}{\storesMeta}}
      {\LST{\varsMeta_1,\dots,\varsMeta_n}}
      {\LST{\varsMeta'_1,\dots,\varsMeta'_n}}
      =\storesMeta'
  \end{align*}
  \normalsize
  \vspace*{-5ex}
  \caption{The semantic rules for the statements of the extended While language}
  \vspace*{-2ex}
  \label{fig:sem-rules-ext-while-language}
\end{figure}

We define the big-step semantics of the extended While language by
defining the predicate $\mi{rule}$ as in
\cref{fig:sem-rules-ext-while-language}.
According to the rule for the array declaration
$\arrDecl{\arrsMeta}{\numeralsMeta}$, the array identifier should be
mapped to the undefined location $\bot$ before the declaration.
%
This array identifier is then associated to the next fresh location
$\locBoundsMeta$ that can be used for arrays, and the next fresh
location for arrays is incremented by the size of the array after the
declaration.
According to the rule for the assignment
$\ASSG{\arrRef{\arrsMeta}{\aExpsMeta_1}}{\aExpsMeta_2}$, the location
of the target array element should not surpass the boundary as given
by the next fresh location $\locBoundsMeta$.
The result of the right-hand side is then placed at this location.
In the rule for function calls, the auxiliary function
$\mi{call}\mbox{-}\mi{ini}$ is used to initialize the store for the
execution of the callee.
The resulting initial store $\storesMeta''$ for the callee maps the
parameters of the callee to the values of the corresponding arguments,
and maps each return variable of the callee to $0$.
%
In the same rule, the auxiliary function $\mi{call}\mbox{-}\mi{fin}$
is used to finalize the store after the execution of the callee.
The resulting store $\storesMeta'$ maps the caller's variables that
receive the return values to the values of the callee's return
variables.
In addition, $\storesMeta'$ maps the memory locations according to
the store $\component{\ewhStatesMeta'_0}{\storesMeta}$ reached after
the execution of the callee's statement.
Hence, the effects of the callee on the arrays passed in by reference
are recorded.
On the other hand, the next fresh location
$\component{\ewhStatesMeta}{\locBoundsMeta}$ for arrays before the
call is kept after the call returns.
Hence, any arrays that are stack-allocated (at their declarations) in
the callee are discarded.

In \cref{fig:sem-rules-ext-while-language}, $\mathcal{A}$ and
$\mathcal{B}$ are evaluation functions for arithmetic expressions and
Boolean expressions, respectively.
The detailed definitions for these two functions, and for the
functions $\mi{call}\mbox{-}\mi{ini}$ and $\mi{call}\mbox{-}\mi{fin}$,
are given in \cref{asec:supp-ext-while}.

\vspace*{-1.5ex}
\subsubsection*{Array-Merging Program and its Verification}

\begin{figure}[t]
  \small
  \begin{align*}
    \mergeProg ~\defEq~
    &
    [\,
    \mergeFId\mapsto (\LST{\SArr,\TArr,\pgVar{i},\pgVar{m},\pgVar{n}},
    \emptyList, \mergeStmt)
    \,]
    \\\\[-2ex]
  \mergeStmt
  ~\defEq ~
    &
      \varDecl{\jVar}; \varDecl{\kVar}; 
    \ASSG{\pgVar{j}}{\pgVar{m}+1}; \ASSG{\pgVar{k}}{\pgVar{i}};
    \mgLoopStmt;
    \tailLoopStmt{\pgVar{i}}{\pgVar{m}};
    \tailLoopStmt{\pgVar{j}}{\pgVar{n}} \\
  \mgLoopStmt
  ~\defEq ~
  &
    \mathsf{while}
    ~\andExpr{\pgVar{i}\le\pgVar{m}}{\pgVar{j}\le\pgVar{n}}~
    \mathsf{do}~( \\ 
  &
    \quad
    (\mathsf{if}~
    \arrRef{\SArr}{\pgVar{i}}\le \arrRef{\SArr}{\pgVar{j}}
    ~\mathsf{then}~
    \ASSG{\arrRef{\TArr}{\pgVar{k}}}{\arrRef{\SArr}{\pgVar{i}}};
    \ASSG{\mathsf{i}}{\mathsf{i}+1} 
    ~\mathsf{else}~ 
    \ASSG{\arrRef{\TArr}{\pgVar{k}}}{\arrRef{\SArr}{\pgVar{j}}};
    \ASSG{\mathsf{j}}{\mathsf{j}+1}); \\
  &
    \quad \ASSG{\pgVar{k}}{\pgVar{k}+1}\, ) \\
  \tailLoopStmt{\pgVar{i}}{\pgVar{m}}
  ~\defEq~
  &
    \mathsf{while}~\pgVar{i}\le\pgVar{m}~\mathsf{do}~
    (\ASSG{\arrRef{\TArr}{\pgVar{k}}}{\arrRef{\SArr}{\pgVar{i}}};
    \ASSG{\mathsf{i}}{\mathsf{i}+1};
    \ASSG{\mathsf{k}}{\mathsf{k}+1}) \\
  \tailLoopStmt{\pgVar{j}}{\pgVar{n}}
  ~\defEq~
  &
    \mathsf{while}~\pgVar{j}\le\pgVar{n}~\mathsf{do}~
    (\ASSG{\arrRef{\TArr}{\pgVar{k}}}{\arrRef{\SArr}{\pgVar{j}}};
    \ASSG{\mathsf{j}}{\mathsf{j}+1}; 
    \ASSG{\mathsf{k}}{\mathsf{k}+1})
    \\[-5ex]
  \end{align*}
  \normalsize
  \vspace*{-2ex}
  \caption{The program $\mergeProg$ that merges sorted array fragments}
  \vspace*{-3ex}
  \label{fig:prog-merge-array}
\end{figure}

The program $\mergeProg$ as shown in \cref{fig:prog-merge-array}
merges the elements in two sorted fragments of an array $\SArr$ into
one sorted fragment in a different array $\TArr$.

The only function in this program is $\mergeFId$. 
Formally, this function is the triple
$(\LST{\SArr,\TArr,\pgVar{i},\pgVar{m},\pgVar{n}},
\emptyList, \mergeStmt)$.
The parameters $\pgVar{i}$ and $\pgVar{m}$ represent the initial and final
index, respectively, for the first fragment of the array $\SArr$ 
participating in the merger. 
The second fragment participating in the merger is from the
index represented by $\pgVar{m}+1$ to the index represented by
$\pgVar{n}$ in the same array $\SArr$.
The target array fragment of the merger is from the index represented
by $\iVar$ to the index represented by $\nVar$, in the array $\TArr$.





For the specification of the program, we use a few pieces of auxiliary
notation.
We write $\arrFrag{\arrsMeta}{l}{h}$ for a triple $(\arrsMeta,l,h)$
that represents the fragment of the array $\arrsMeta$ from the index
$l$ to the index $h$. 
We write $\listOfArrFrag{\arrFrag{\arrsMeta}{l}{h}}{\ewhStatesMeta}$
for the list
$\LST{\ewhStatesMeta(\locsMeta+l), \dots,
  \ewhStatesMeta(\locsMeta+h)}$ where
$\locsMeta=\ewhStatesMeta(\arrsMeta)$, i.e., the list of elements of
the array $\arrsMeta$ from the index $l$ to the index $h$.
We write $\occ{\LST{\intsMeta_1,\dots,\intsMeta_n}}$ for the function
$\occFuncsMeta$ mapping each integer $\intsMeta$ to the number of
occurrences of $\intsMeta$ in the list
$\LST{\intsMeta_1,\dots,\intsMeta_n}$ of integers.
For two such functions $\occFuncsMeta_1$ and $\occFuncsMeta_2$, we
write $\occAdd{\occFuncsMeta_1}{\occFuncsMeta_2}$ for the function
$\lambda
\intsMeta. \occFuncsMeta_1(\intsMeta)+\occFuncsMeta_2(\intsMeta)$.
We write $\sorted{\LST{\intsMeta_1,\dots,\intsMeta_n}}$ to express
that the list $\LST{\intsMeta_1,\dots,\intsMeta_n}$ of integers is
sorted in ascending order.
We write $\sepArrFrag{\arrFrag{X}{l_1}{h_1}}{\arrFrag{Y}{l_2}{h_2}}{\ewhStatesMeta}$ 
to express that the elements of the array $X$ from the index
$l_1$ to the index $h_1$ occupy a separate memory area from that
occupied by the elements of the array $Y$ from the index
$l_2$ to the index $h_2$, in the state $\ewhStatesMeta$. 
In addition, we write
$\preserved{\ewhStatesMeta}{\ewhStatesMeta'}
{\varOrArrFragMeta_1,\dots,\varOrArrFragMeta_n}$ to express for each
$i\in\{1,\dots,n\}$, the value of each $\varOrArrFragMeta_i$ is the
same in the states $\ewhStatesMeta$ and $\ewhStatesMeta'$.
Here, $\varOrArrFragMeta_i$ can be a variable $\varsMeta$ or an array
fragment $\arrFrag{\arrsMeta}{l}{h}$.
In the latter case, that the value of $\arrFrag{\arrsMeta}{l}{h}$ is
the same in the two states means
$\forall i: l \le i \le h\Rightarrow
\ewhStatesMeta(\ewhStatesMeta(\arrsMeta)+i)=
\ewhStatesMeta'(\ewhStatesMeta'(\arrsMeta)+i)$.

For the program $\mergeProg$, we devise the specification $\specMSort$.
We denote the starting index for the first source array fragment
in $\SArr$ as well as for the target array fragment in $\TArr$ by $\lVal$.
We use $\lVal$ as a global parameter in the specification.
%

We specify the function $\mergeFId$ as

\vspace*{-3ex}
\small
\begin{align*}
  & 
  \specMSort(
    \ewhConfig{\funCall{\mergeFId}
    {\arrsMeta,\yarrsMeta,\aExpl,\aExpm,\aExph}{}} 
    {\ewhStatesMeta}{\mergeSortProg}
    ) \,\defEq \, \\
  & \quad
    \begin{aligned}
      &
      \{
      \ewhStatesMeta' \mid
      \occ{\listOfArrFrag{\arrFrag{\arrsMeta}{\lVal}{\hVal}}{\ewhStatesMeta}}
      =
      \occ{\listOfArrFrag{\arrFrag{\yarrsMeta}{\lVal}{\hVal}}{\ewhStatesMeta'}}
      \,\land\,
      \sorted{\listOfArrFrag{\arrFrag{\yarrsMeta}{\lVal}{\hVal}}{\ewhStatesMeta'}}
      \, \}
      \\\\[-2.5ex]
      &
      \mrm{if}~
      \ewhAEval{\aExpl}\ewhStatesMeta \!=\! l\,\land\,
      0\!\le\! \lVal\! \le\! \mVal \!<\! \hVal \,\land\, 
      \sorted{
        \listOfArrFrag{\arrFrag{\arrsMeta}{\lVal}{\mVal}}
        {\ewhStatesMeta}}
      \land
      \sorted{
        \listOfArrFrag{\arrFrag{\arrsMeta}{\mVal+1}{\hVal}}
        {\ewhStatesMeta}} 
      \land 
      \sepArrFrag{\arrFrag{\arrsMeta}{\lVal}{\hVal}}{\arrFrag{\yarrsMeta}{\lVal}{\hVal}}
      {\ewhStatesMeta}
      \\
      &
      \mrm{where}~
      \mVal=\ewhAEval{\aExpm}\ewhStatesMeta \,\land\,
      \hVal=\ewhAEval{\aExph}\ewhStatesMeta
    \end{aligned}
\end{align*}
\normalsize
This specification says that if we call the function $\mergeFId$
with two array identifiers $\arrsMeta$ and $\yarrsMeta$, and
expressions $\aExpl$, $\aExpm$, $\aExph$ that evaluate to
$\lVal$, $\mVal$ and $\hVal$, such that

\vspace*{-3ex}
\begin{itemize}
\item $0\le\lVal\le\mVal<\hVal$ holds,  
\item the array fragments $\arrFrag{\arrsMeta}{\lVal}{\mVal}$
  and $\arrFrag{\arrsMeta}{\mVal+1}{\hVal}$ are sorted in the pre-state, 
\item the array fragments $\arrFrag{\arrsMeta}{\lVal}{\mVal}$
  and $\arrFrag{\arrsMeta}{\mVal+1}{\hVal}$ are separated in the pre-state,
  \vspace*{-3ex}
\end{itemize}
then the number of occurrences of each integer in the target array
fragment $\arrFrag{\yarrsMeta}{\lVal}{\hVal}$ in the post-state is the
same as its number of occurrences in the source array fragment
$\arrFrag{\arrsMeta}{\lVal}{\hVal}$ in the pre-state, and the target
array fragment $\arrFrag{\yarrsMeta}{\lVal}{\hVal}$ is sorted in
ascending order in the post-state.

The core part of the function $\mergeFId$ is the loop statement
$\mgLoopStmt$ (see \cref{fig:prog-merge-array}).
We specify this loop as

\vspace*{-2ex}
\small
\begin{align*}
  &
  \specMSort(\ewhConfig{\mgLoopStmt}{\ewhStatesMeta}{\mergeSortProg}) 
    \,\defEq\, \\
  &\quad 
  \begin{aligned}
  &
    \{
    \ewhStatesMeta' \mid
    (i \le \ewhStatesMeta'(\iVar)=\mVal+1 \land
    \jVal \le \ewhStatesMeta'(\jVar) \le \nVal \lor 
    j \le \ewhStatesMeta'(\jVar)=\nVal+1 \land
    \iVal \le \ewhStatesMeta'(\iVar) \le \mVal) \,\land\, \\
  &
    \qquad~
    \ewhStatesMeta'(\kVar)=
    \kVal + \ewhStatesMeta'(\iVar) - \iVal
    + \ewhStatesMeta'(\jVar) - \jVal \,\land \,
    \preserved{\ewhStatesMeta}{\ewhStatesMeta'}
    {\mVar, \nVar, \SArr, \TArr, \arrFrag{\SArr}{\lVal}{\nVal},
    \arrFrag{\TArr}{\lVal}{\kVal-1}} \,\land\, 
    \\
  &
  \qquad~
  \occAdd{
    \occ{\listOfArrFrag{\arrFrag{\SArr}{\iVal}{\ewhStatesMeta'(\iVar)-1}}
      {\ewhStatesMeta}}
  }
  {
    \occ{\listOfArrFrag{\arrFrag{\SArr}{\jVal}{\ewhStatesMeta'(\jVar)-1}}
      {\ewhStatesMeta}} 
  }=
  \occ{\listOfArrFrag{\arrFrag{\TArr}{\kVal}{\ewhStatesMeta'(\kVar)-1}}
    {\ewhStatesMeta'}}
  \,\land\, 
  \sorted{
    \listOfArrFrag
    {\arrFrag{\TArr}{\lParam}{\ewhStatesMeta'(\kVar)-1}}
    {\ewhStatesMeta'}
  }
  \,\land\, \\
  &\qquad~
   (
  \ewhStatesMeta'(\iVar)\le m \land
  \ewhStatesMeta'(\kVar)\ge l+1 \Rightarrow 
  \ewhAEval{\arrRef{\SArr}{\iVar}}\ewhStatesMeta' \!\ge\!
  \ewhAEval{\arrRef{\TArr}{\kVar\!-\!1}}\ewhStatesMeta') \,\land\,  
  \\
  &\qquad~
  (
  \ewhStatesMeta'(\jVar)\le n \land
  \ewhStatesMeta'(\kVar) \ge l+1 \Rightarrow 
   \ewhAEval{\arrRef{\SArr}{\jVar}}\ewhStatesMeta' \!\ge\!
   \ewhAEval{\arrRef{\TArr}{\kVar\!-\!1}}\ewhStatesMeta')  
    \,
    \}
    \\\\[-2.5ex]
    &
    \mrm{if}~0\le \lParam \le \iVal \le \mVal < \jVal \le \nVal \,\land\,
    k = i+j-m-1 \,\land\,
    \\
    &\quad
    (k \ge l+1 \,\Rightarrow\,
    \ewhAEval{\arrRef{\SArr}{\iVar}}\ewhStatesMeta \!\ge\!
    \ewhAEval{\arrRef{\TArr}{\kVar\!-\!1}}\ewhStatesMeta
    \land
    \ewhAEval{\arrRef{\SArr}{\jVar}}\ewhStatesMeta \!\ge\!
    \ewhAEval{\arrRef{\TArr}{\kVar\!-\!1}}\ewhStatesMeta)\,\land\, 
    \\
    &\quad 
    \sorted{\listOfArrFrag{\arrFrag{\SArr}{\iVal}{\mVal}}{\ewhStatesMeta}}
    \,\land\,
    \sorted{\listOfArrFrag{\arrFrag{\SArr}{\jVal}{\nVal}}{\ewhStatesMeta}}
    \,\land\,
    \sorted{\listOfArrFrag{\arrFrag{\TArr}{\lParam}{k-1}}{\ewhStatesMeta}}
    \,\land\, 
      \sepArrFrag
      {\arrFrag{\SArr}{\lVal}{\nVal}}
      {\arrFrag{\TArr}{\lVal}{\nVal}}
      {\ewhStatesMeta} 
  \\
  & \mrm{where}~
    \iVal = \ewhStatesMeta(\iVar)\,\land \,
    \jVal = \ewhStatesMeta(\jVar)\,\land \,
    \kVal = \ewhStatesMeta(\kVar)\,\land \,
    \mVal = \ewhStatesMeta(\mVar)\,\land \,
    \nVal = \ewhStatesMeta(\nVar)
  \end{aligned}
\end{align*}
\normalsize
In the specification, we are concerned with pre-states in which either
the overall loop is yet to be executed, or some rounds of the loop
have been completed and some further rounds are to be executed.
We constrain these pre-states with a few further conditions.
One of these conditions states that the elements with indexes $\iVar$
and $\jVar$ that are to be compared in the next round are both greater
than or equal to the last element that has been set in the target
array fragment.
For each pre-state that satisfies all the conditions in the ``if''
part, several conditions are asserted for the potential post-state
$\ewhStatesMeta'$.
A key condition here says that the two fragments
{\small$\arrFrag{\SArr}{\iVal}{\ewhStatesMeta'(\iVar)-1}$} and
{\small$\arrFrag{\SArr}{\jVal}{\ewhStatesMeta'(\jVar)-1}$} in the
source array that are scanned between the reaching of the pre-state
and the post-state agree with the fragment
{\small$\arrFrag{\TArr}{\kVal}{\ewhStatesMeta'(\kVar)-1}$} that is
filled between the reaching of the pre-state and the post-state.
Another key condition says that the fragment
{\small$\arrFrag{\TArr}{\lVal}{\ewhStatesMeta'(\kVar)-1}$} of the
target array that is already filled in the post-state for the loop is
sorted in ascending order.


Without specification inference, the two remaining loops in the
array-merging program also need to be explicitly specified.
The specification of these two loops is much 
less involved than that for the first loop, and it is defered to
\cref{asec:supp-ext-while}. 
%
%
With the technique of \cref{sec:method}, the validity of 
$\specMSort$ can be established. 
\begin{theorem}\label{thm:while-ext-merge-spec-valid} 
  It holds that $\specValid{\specMSort}$. 
\end{theorem}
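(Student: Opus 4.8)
The plan is to apply \cref{thm:soundness}: since $\specValid{\specMSort}$ follows from $\verifySpec{\specMSort}$, it suffices to prove the latter, i.e.\ to show that for all $\configsMeta$ and $\rConfigsMeta$, $\INFER{\configsMeta}{\rConfigsMeta}{\specMSort}$ implies $\rConfigsMeta\in\specMSort(\configsMeta)$. I would proceed by case analysis on the form of $\configsMeta$. Whenever $\configsMeta$ is neither a call $\ewhConfig{\funCall{\mergeFId}{\arrsMeta,\yarrsMeta,\aExpl,\aExpm,\aExph}{}}{\ewhStatesMeta}{\mergeSortProg}$ nor one of the loop configurations $\ewhConfig{\mgLoopStmt}{\ewhStatesMeta}{\mergeSortProg}$, $\ewhConfig{\tailLoopStmt{\pgVar{i}}{\pgVar{m}}}{\ewhStatesMeta}{\mergeSortProg}$, $\ewhConfig{\tailLoopStmt{\pgVar{j}}{\pgVar{n}}}{\ewhStatesMeta}{\mergeSortProg}$ with its stated precondition, we have $\specMSort(\configsMeta)=\rConfigs$ and the goal is immediate. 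So only the three loops and the call require real work.

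For a loop case, say $\configsMeta=\ewhConfig{\mgLoopStmt}{\ewhStatesMeta}{\mergeSortProg}$ with the ``if''-precondition, I would unfold $\INFER{\configsMeta}{\rConfigsMeta}{\specMSort}$ against the two $\mathsf{while}$ rules of \cref{fig:sem-rules-ext-while-language}. If the guard is false in $\ewhStatesMeta$ then $\rConfigsMeta=\ewhStatesMeta$, and I read off from the negated guard together with the precondition that $\ewhStatesMeta$ lies in the post-set (the instance with zero further rounds, so the scanned and filled fragments are empty). If the guard is true, the inference yields a premise for one execution of the loop body and a premise $\ewhConfig{\mgLoopStmt}{\ewhStatesMeta''}{\mergeSortProg}$ for the rest; since $\specMSort$ says nothing about the body (a sequence of an inner $\mathsf{if}$ and assignments), $\INFER$ applies to it, so I compute $\ewhStatesMeta''$ symbolically, splitting on the branch $\arrRef{\SArr}{\pgVar{i}}\le\arrRef{\SArr}{\pgVar{j}}$, and check that $\ewhStatesMeta''$ again satisfies the loop precondition. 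Then $\rConfigFor{\ewhConfig{\mgLoopStmt}{\ewhStatesMeta''}{\mergeSortProg}}{\rConfigsMeta}{\specMSort}$ places $\rConfigsMeta$ in the post-set of $\ewhStatesMeta''$, and it remains to show that composing the single body step with that post-set stays inside the post-set of $\ewhStatesMeta$. This last containment carries the bookkeeping: the index identities, the occurrence-multiset identity $\occAdd{\occ{\cdot}}{\occ{\cdot}}=\occ{\cdot}$, the preservation of the untouched fragments, and in particular propagating the sortedness of the already-filled target fragment together with the two lookahead inequalities between $\arrRef{\SArr}{\iVar}$, $\arrRef{\SArr}{\jVar}$ and $\arrRef{\TArr}{\kVar-1}$ — which are exactly what makes the newly written element no smaller than its predecessor. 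The two tail-loop cases are analogous but far lighter, their invariants tracking only an index relation, a preserved-fragment clause, and a one-sided occurrence/sortedness claim.

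For the call case, I unfold $\INFER$ with the function-call rule, which initializes the callee's store (passing $\arrsMeta,\yarrsMeta$ by reference as $\SArr,\TArr$ and $\aExpl,\aExpm,\aExph$ by value), gives a premise for $\ewhConfig{\mergeStmt}{\cdot}{\mergeSortProg}$, and then finalizes the caller's store by writing the array effects back. Since $\specMSort$ does not specify $\mergeStmt$, I push $\INFER$ through its body (\cref{fig:prog-merge-array}): the two declarations and the assignments $\ASSG{\jVar}{\pgVar{m}+1}$, $\ASSG{\kVar}{\pgVar{i}}$ are executed symbolically, establishing the precondition of $\mgLoopStmt$; the premise for $\mgLoopStmt$ then hands back, via $\rConfigFor$, a state in the post-set prescribed by $\specMSort$ for $\mgLoopStmt$; from the disjunction in that post-set, one of $\tailLoopStmt{\pgVar{i}}{\pgVar{m}}$, $\tailLoopStmt{\pgVar{j}}{\pgVar{n}}$ has its guard already false (a no-op) while the other copies the remaining suffix, and I chain their specifications the same way; finally I combine the three occurrence-multiset equalities and the sortedness facts (using the lookahead inequality to splice the merged prefix onto the copied suffix) to conclude that $\arrFrag{\yarrsMeta}{\lVal}{\hVal}$ in the post-state — equal in contents to $\arrFrag{\TArr}{\lVal}{\hVal}$, since $\TArr$ and $\yarrsMeta$ alias — has the same multiset as $\arrFrag{\arrsMeta}{\lVal}{\hVal}$ in the pre-state and is sorted, i.e.\ $\rConfigsMeta\in\specMSort(\configsMeta)$.

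I expect the main obstacle to be reproducing the elaborate invariant of $\mgLoopStmt$ over one round across both branches of the inner $\mathsf{if}$ — carrying the sortedness of the filled target fragment and the two lookahead inequalities through the updates of $\iVar$ (or $\jVar$) and $\kVar$ — and then, in the call case, gluing the three loop post-conditions (with one tail loop vacuous) into the single multiset-equality-plus-sortedness claim for $\mergeFId$. Array-bounds reasoning, by contrast, is automatic: the unspecified array assignments can only discharge $\INFER$ through their $\mi{rule}$, whose side condition $\locsMeta+\intsMeta_1<\component{\ewhStatesMeta}{\locBoundsMeta}$ therefore arrives as a hypothesis, while the separation clauses $\sepArrFrag{\arrFrag{\SArr}{\lVal}{\nVal}}{\arrFrag{\TArr}{\lVal}{\nVal}}{\ewhStatesMeta}$ guarantee that writes to $\TArr$ leave $\SArr$ untouched.
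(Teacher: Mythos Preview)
Your proposal is correct and follows essentially the same approach as the paper: reduce to $\verifySpec{\specMSort}$ via \cref{thm:soundness}, then do a case analysis on $\configsMeta$ and discharge each specified configuration by unfolding $\mi{infer}^{\specMSort}$ into symbolic execution of the body, consuming the auxiliary loop specifications through $\mi{res}^{\specMSort}$. The paper compresses all of this into one sentence (``symbolic execution with the help of a series of auxiliary lemmas about the memory layout''), whereas you spell out the per-loop and per-call bookkeeping explicitly, but the structure is the same.
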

With the help of \cref{thm:soundness}, the proof requires no induction
for reasoning about the loops.
This proof boils down to symbolic execution with the help of a series
of auxiliary lemmas about the memory layout. 

\begin{remark}
  The global parameter $\lParam$ in the specification $\specMSort$
  relates the auxiliary information about calls to $\mergeFId$ and
  about the loops in this function.
  %
  The role of $\lParam$ can be compared to that of a logical variable
  in a concrete program logic.
  Such global parameters are captured in the Coq formalization by an
  explicit argument in the specifications.
  %
  The type of this argument can be instantiated according to the needs
  in verifying each specific program.
  The verification of a program is required to go through for all
  possible values of this argument.
\end{remark}



\vspace*{-2ex}
\subsection{Eager Functional Language and List-Merging Program}\label{ssec:functional}

\subsubsection*{Eager Functional Language}

The language considered in this section is a fragment of the eager
functional language as discussed in \cite{Reynolds1998}.
The expressions and canonical forms of this language are given in
\cref{fig:exprs-and-canonical-forms}. 
%

\begin{figure}[t]
     \centering
     \begin{subfigure}[b]{0.48\textwidth}
       \small
       \begin{align*}
         \fExprsMeta \,\synDefEq~
         &
           \fNumConstsMeta \mid \trueConst \mid
           \falseConst \mid  \\
         &
           \FADD{\fExprsMeta}{\fExprsMeta} \mid
           \FSUB{\fExprsMeta}{\fExprsMeta} \mid
           \FMUL{\fExprsMeta}{\fExprsMeta} \mid
           \FDIV{\fExprsMeta}{\fExprsMeta} \mid \\
         &
           \FEQ{\fExprsMeta}{\fExprsMeta} \mid
           \FLT{\fExprsMeta}{\fExprsMeta} \mid
           \FNEG{\fExprsMeta} \mid
           \FAND{\fExprsMeta}{\fExprsMeta} \mid \\
         &
           \IFTHENELSE{\fExprsMeta}{\fExprsMeta}{\fExprsMeta} \mid \\
         &
           \NIL \mid \CONCAT{\fExprsMeta}{\fExprsMeta} \mid
           \LCASE{\fExprsMeta}{\fExprsMeta}{\fExprsMeta} \mid \\ 
         &
           \fVarsMeta \mid
           \FAPP{\fExprsMeta}{\fExprsMeta} \mid
           \LAM{\fVarsMeta}{\fExprsMeta} \mid 
           \LETRECIN{\fVarsMeta}{\LAM{\fVarsMeta'}{\fExprsMeta}}{\fExprsMeta}
       \end{align*}
       \normalsize
     \end{subfigure}
     \hfill
     \begin{subfigure}[b]{0.48\textwidth}
       \small
       \begin{align*}
         \cfmsMeta \,\synDefEq\,
         &
           \intCfmsMeta \mid \boolCfmsMeta \mid
           \funCfmsMeta \mid \lstCfmsMeta \\
         \intCfmsMeta \,\synDefEq\,
         &
           \ldots \mid -2 \mid -1 \mid 0 \mid 1 \mid 2 \mid \ldots \\
         \boolCfmsMeta \,\synDefEq\,
         &
           \trueConst \mid \falseConst \\
         \funCfmsMeta \,\synDefEq\,
         &
           \LAM{\fVarsMeta}{\fExprsMeta} \\
         \lstCfmsMeta \,\synDefEq\,
         &
           \NIL \mid \CONCAT{\cfmsMeta}{\cfmsMeta} 
       \end{align*}
       \normalsize
     \end{subfigure}
     \caption{The expressions and canonical forms of the
       eager functional language}
     \vspace*{-3ex}
   \label{fig:exprs-and-canonical-forms}
\end{figure}

\vspace*{-1ex}
\paragraph{Syntax. } 
%
%
A program of the eager functional language is an expression.
The syntax for expressions is given in the left part of
\cref{fig:exprs-and-canonical-forms}.
Here, $\fNumConstsMeta$ is a numeral,
$\fVarsMeta$ is a variable,
$\FAPP{\fExprsMeta}{\fExprsMeta'}$
is an application,
$\LAM{\fVarsMeta}{\fExprsMeta}$ is a lambda abstraction, 
$\NIL$ is the empty list,
and $\CONCAT{\exprsMeta_1}{\exprsMeta_2}$ is the list obtained by
prefixing the list $\exprsMeta_2$ with the element $\exprsMeta_1$.
The expression $\LCASE{\fExprsMeta}{\fExprsMeta'}{\fExprsMeta''}$
branches to $\fExprsMeta'$ or $\fExprsMeta''$ depending on whether the
result of $\fExprsMeta$ is the empty list $\NIL$.
The expression
$\LETRECIN{\fVarsMeta}{\LAM{\fVarsMeta'}{\fExprsMeta'}}{\fExprsMeta}$
binds $\fVarsMeta$ to $\LAM{\fVarsMeta'}{\fExprsMeta'}$ in $\fExprsMeta$.
This expression allows $\fVarsMeta$ to be used in $\fExprsMeta'$,
thereby allowing recursion. 

%
%

\vspace*{-1ex}
\paragraph{Semantics. } 
The evaluation of the expressions results in canonical forms
$\cfmsMeta$ as given in the right part of
\cref{fig:exprs-and-canonical-forms}.
A canonical form $\cfmsMeta$ can be
a canonical form for integers ($\intCfmsMeta$),
a canonical form for Boolean values ($\boolCfmsMeta$),
a canonical form for functions ($\funCfmsMeta$), or
a canonical form for lists ($\lstCfmsMeta$).


\begin{figure}[t]
  \small
  \begin{align*}
    ~\\[-5ex]
    &
      \RULE{(\cfmsMeta,\cfmsMeta)}{\,} \\
    &
      \RULE
      {(\fExprsMeta_1\,\op\,\fExprsMeta_2,
      \interpICfm{\op}(\intCfmsMeta_{\!1},\intCfmsMeta_{\!2}))}  
      {(\fExprsMeta_{1}, \intCfmsMeta_{\!1}),
      (\fExprsMeta_2,\intCfmsMeta_{\!2})}
      ~
      \mrm{where}~
      \op\in\{+,-,*,/,=,<\}\\
    &
      \RULE
      {(\lnot \fExprsMeta, \interpBCfm{\lnot}(\boolCfmsMeta))}
      {(\fExprsMeta, \boolCfmsMeta)} \\
    &
      \RULE
      {(\fExprsMeta_1\,\land\,\fExprsMeta_2,
      \interpBCfm{\land}(\boolCfmsMeta_{\!1},\boolCfmsMeta_{\!2}))}
      {(\exprsMeta_1,\boolCfmsMeta_{\!1}),
      (\exprsMeta_2,\boolCfmsMeta_{\!2})}
    \\
    &
      \RULE
      {
      (\IFTHENELSE{\fExprsMeta}{\fExprsMeta'}{\fExprsMeta''}, 
      \cfmsMeta) 
      }
      {
      (\fExprsMeta,\trueConst),
      (\fExprsMeta',\cfmsMeta) 
      }
    \\
    &
      \RULE
      {
      (\IFTHENELSE{\fExprsMeta}{\fExprsMeta'}{\fExprsMeta''},
      \cfmsMeta) 
      }
      {
      (\fExprsMeta,\falseConst),
      (\fExprsMeta'',\cfmsMeta) 
      }
    \\
    &
      \RULE
      {(\CONCAT{\fExprsMeta}{\fExprsMeta'},
      \CONCAT{\cfmsMeta}{\cfmsMeta'})}
      {
      (\fExprsMeta, \cfmsMeta),
      (\fExprsMeta',\cfmsMeta') 
      }
    \\
    &
      \RULE
      {
      (\LCASE{\fExprsMeta}{\fExprsMeta'}{\fExprsMeta''},
      \cfmsMeta)
      }
      {
      (\fExprsMeta,\NIL),
      (\fExprsMeta',\cfmsMeta)
      }
    \\
    &
      \RULE
      {
      (\LCASE{\fExprsMeta}{\fExprsMeta'}{\fExprsMeta''},
      \cfmsMeta'') 
      }
      {
      (\fExprsMeta,\CONCAT{\cfmsMeta}{\cfmsMeta'}),
      (\FAPP{\FAPP{\fExprsMeta''}{\cfmsMeta}}{\cfmsMeta'},
      \cfmsMeta'')
      }
      \\
    &
      \RULE
      {
      (\FAPP{\fExprsMeta}{\fExprsMeta'}, \cfmsMeta)
      }
      {
      (\fExprsMeta, \LAM{\fVarsMeta}{\fExprsMeta''}),
      (\fExprsMeta', \cfmsMeta'),
      (\fExprsMeta''[\cfmsMeta'/\fVarsMeta], \cfmsMeta)
      }
    \\
    &
      \RULE
      {
      (\LETRECIN{\fVarsMeta}{\LAM{\fVarsMeta'}{\fExprsMeta'}}{\fExprsMeta},
      \cfmsMeta)
      }
      {
      (\FAPP{(\LAM{\fVarsMeta}{\fExprsMeta})}
      {(\LAM{\fVarsMeta'}
      {\LETRECIN{\fVarsMeta}{\LAM{\fVarsMeta'}{\fExprsMeta'}}{\fExprsMeta'}})},
      \cfmsMeta 
      )
      }
      ~~\mrm{if}~
      \fVarsMeta'\neq\fVarsMeta
  \end{align*}
  \normalsize
  \vspace*{-4.5ex}
  \caption{The semantic rules for the eager functional language}
    \vspace*{-3.5ex}
  \label{fig:sem-rules-eager-fun-lang}
\end{figure}

We formulate the big-step semantics of the eager functional language
by defining the predicate $\mi{rule}$ as in
\cref{fig:sem-rules-eager-fun-lang}.
In this figure, the fist rule says that an expression that is a
canonical form evaluates to itself.
%
%
In the next few rules, $\interpICfm{\op}$ is used to represent the
interpretation of the binary operation $\op$ on integer canonical
forms.
In addition, $\interpBCfm{\lnot}$ and $\interpBCfm{\land}$ represent
the interpretations of the logical operations $\lnot$ and $\land$,
respectively, on Boolean canonical forms. 
The expression $\fSubst{\exprsMeta}{\varsMeta}{\cfmsMeta}$ represents
substitutions of $\cfmsMeta$ for $\varsMeta$ in $\exprsMeta$.
This expression is defined in \cref{asec:subst}.
In the rule for $\mathsf{letrec}$, the evaluation of the expression
$\LETRECIN{\fVarsMeta}{\LAM{\fVarsMeta'}\fExprsMeta'}{\fExprsMeta}$ is
turned into the evaluation of
$\FAPP{(\LAM{\fVarsMeta}{\fExprsMeta})} {(\LAM{\fVarsMeta'}
  {\LETRECIN{\fVarsMeta}{\LAM{\fVarsMeta'}{\fExprsMeta'}}{\fExprsMeta'}})}$.
Intuitively, when $\fExprsMeta$ is evaluated, $\fVarsMeta$ can be
applied as
$(\LAM{\fVarsMeta'}
{\LETRECIN{\fVarsMeta}{\LAM{\fVarsMeta'}{\fExprsMeta'}}{\fExprsMeta'}})$.
When the second $\fExprsMeta'$ in
$(\LAM{\fVarsMeta'}
{\LETRECIN{\fVarsMeta}{\LAM{\fVarsMeta'}{\fExprsMeta'}}{\fExprsMeta'}})$
is evaluated, the name $\fVarsMeta$ is still available and bound to a
$\lambda$-abstraction.

\vspace*{-2ex}
\subsubsection*{List-Merging Program and its Verification}


The program $\mergeExpr(\lstCfmsMeta_{\!1},\lstCfmsMeta_{\!2})$ below
merges two sorted lists into a single sorted list.
More concretely, the variable $\mergeFId$ is bound to the expression
$\lambda\xFVar.\lambda\xpFVar.\lcaseExpr$ that destructs the lists
that are bound to $\xFVar$ and $\xpFVar$, respectively.
%
In case one of the lists is empty, the result of the merger is the
other list.
Otherwise, the result of the merger is obtained by prefixing the
smaller head element of the two given lists over the merging result of
the remaining parts of the lists.

\vspace*{-3ex}
\small
  \begin{align*}
    \mergeExpr(\lstCfmsMeta_{\!1},\lstCfmsMeta_{\!2})
    \,\defEq \, 
    &
      \mathsf{letrec}~
      \mergeVar =
      (\lambda \xFVar.\lambda \xpFVar.\lcaseExpr)~
      \mathsf{in}~
      \FAPP{\FAPP{\mergeVar}{\lstCfmsMeta_{\!1}}}{\lstCfmsMeta_{\!2}} 
    \\
    \lcaseExpr
    \,\defEq\,
    &
      \mathsf{listcase}~\xFVar~\mathsf{of}~
      (\xpFVar, \lambda\iFVar. \lambda \rFVar.
      \mathsf{listcase}~\xpFVar~\mathsf{of}~
      (\xFVar, \lambda \ipFVar. \lambda \rpFVar. 
      \ifExpr))
    \\
    \ifExpr \,\defEq\,
    &
      \IFTHENELSE
      {\iFVar\le \ipFVar}
      {\CONCAT{\iFVar}{\FAPP{\FAPP{\mergeVar}{\rFVar}}{\xpFVar}}}
      {\CONCAT{\iFVar'}{\FAPP{\FAPP{\mergeVar}{\xFVar}}{\rpFVar}}} 
  \end{align*}
  \normalsize
\vspace*{-3ex}  

To develop a specification for the list-merging program, we define
a piece of auxiliary notation.
We write $\listOfLstCfm{\lstCfmsMeta}$ for the mathematical list of
integers represented by the canonical form $\lstCfmsMeta$ for lists.
Formally, we define
$\listOfLstCfm{\NIL}\defEq \emptyList$,
$\listOfLstCfm{\CONCAT{\intCfmsMeta}{\lstCfmsMeta}}$
$\defEq
{\intCfmsMeta}\lstPrepend{\listsOfIntsMeta}$ if
$\listsOfIntsMeta= \listOfLstCfm{\lstCfmsMeta}\,\land\,
\listsOfIntsMeta \in \ints^{*}$, and 
$\listOfLstCfm{\lstCfmsMeta}\defEq\bot$ otherwise.
     

We devise the a specification for the list-merging program, 
$\specMGList$. 
Using the function $\mi{occ}$ and the predicate $\mi{sorted}$
introduced in \cref{sec:ext-while}, we specify the expression
$\mergeExpr(\lstCfmsMeta_{\!1},\lstCfmsMeta_{\!2})$ as

\vspace*{-2ex}
\small
\begin{align*}
  &
    \begin{aligned}
      &
      \specMGList(\mergeExpr(\lstCfmsMeta_{\!1},\lstCfmsMeta_{\!2}))
      \,\defEq\, \\
      &\qquad 
      \{ \lstCfmsMeta \mid \exists \listsOfIntsMeta\in \ints^*\!:
      \listsOfIntsMeta = \listOfLstCfm{\lstCfmsMeta}\,\land
      \occ{\listsOfIntsMeta}=
      \occAdd{\occ{\listsOfIntsMeta_1}}{\occ{\listsOfIntsMeta_2}}\,\land
      \, \sorted{\listsOfIntsMeta} \}
    \end{aligned}
  \\
  &
    \mrm{if}~
    \listsOfIntsMeta_1\in \ints^*\,\land\,
    \listsOfIntsMeta_2\in \ints^*\,\land\,
    \sorted{\listsOfIntsMeta_1}\,\land\,
    \sorted{\listsOfIntsMeta_2}
  \\
  &\mrm{where}~
    \listsOfIntsMeta_1=\listOfLstCfm{\lstCfmsMeta_{\!1}}\,\land\,
    \listsOfIntsMeta_2=\listOfLstCfm{\lstCfmsMeta_{\!2}}
    \\[-4ex]
\end{align*}
\normalsize
This specification says that given list canonical forms
$\lstCfmsMeta_{\!1}$ and $\lstCfmsMeta_{\!2}$ that are both sorted in
ascending order, the result of executing
$\mergeExpr(\lstCfmsMeta_{\!1},\lstCfmsMeta_{\!2})$ is a list canonical form
$\lstCfmsMeta$.
The list canonical form $\lstCfmsMeta$ contains the elements as
contained in either $\lstCfmsMeta_{\!1}$ or $\lstCfmsMeta_{\!2}$.
Furthermore, the list canonical form $\lstCfmsMeta$ is sorted in
ascending order.

To support the verification of the specification for
$\mergeExpr(\lstCfmsMeta_{\!1},\lstCfmsMeta_{\!2})$, we specify an
unfolded form of this expression. 
The execution of this unfolded form either terminates directly, or
gives the same form again.

\vspace*{-2ex}
\small
\begin{align*}
  &
    \begin{aligned}
      &
      \specMGList(\,
      (\LAM{\xFVar}
      {\LETRECIN{\mergeFId}
        {\LAM{\xFVar}{\LAM{\xpFVar}{\lcaseExpr}}}
        {\LAM{\xpFVar}{\lcaseExpr}}})
      \,\, \lstCfmsMeta_{\!1}\, \lstCfmsMeta_{\!2}\,) 
      \, \defEq\,  \\
      & \qquad
      \{
      \lstCfmsMeta \mid
      \exists \listsOfIntsMeta\in \ints^*\!:
      \listsOfIntsMeta = \listOfLstCfm{\lstCfmsMeta}\,\land\,
      \occ{\listsOfIntsMeta}=
      \occAdd{\occ{\lstCfmsMeta_{\!1}}}{\occ{\lstCfmsMeta_{\!2}}}\,\land \, 
      \sorted{\listsOfIntsMeta} 
      \}
    \end{aligned}
  \\
  &
    \mrm{if}~
    \listsOfIntsMeta_1\in \ints^*\,\land\,
    \listsOfIntsMeta_2\in \ints^*\,\land\,
    \sorted{\listsOfIntsMeta_1}\,\land\,
    \sorted{\listsOfIntsMeta_2}
  \\
  &\mrm{where}~
    \listsOfIntsMeta_1=\listOfLstCfm{\lstCfmsMeta_{\!1}}\,\land\,
    \listsOfIntsMeta_2=\listOfLstCfm{\lstCfmsMeta_{\!2}}
\end{align*}
\normalsize
This specification reflects that the unfolded expression
$(\lambda \xFVar. \mathsf{letrec}~\mergeFId\,=\,\lambda \xFVar.
\lambda \xpFVar.$
$\lcaseExpr\,\mathsf{in}\, \lambda \xpFVar.\lcaseExpr) \,\,
\lstCfmsMeta_{\!1}\, \lstCfmsMeta_{\!2}$
delivers analogous guarantees to those
delivered by the original expression
$\mergeExpr(\lstCfmsMeta_{\!1},\lstCfmsMeta_{\!2})$.

With the technique of \cref{sec:method}, the validity of
$\specMGList$ can be established. 
\begin{theorem}\label{thm:fun_correct}
  It holds that $\specValid{\specMGList}$.  
\end{theorem}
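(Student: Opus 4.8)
The plan is to apply \cref{thm:soundness}, which reduces the claim to proving $\verifySpec{\specMGList}$: for every expression $\fExprsMeta$ and canonical form $\cfmsMeta$ with $\INFER{\fExprsMeta}{\cfmsMeta}{\specMGList}$, show $\cfmsMeta \in \specMGList(\fExprsMeta)$. The specification $\specMGList$ returns the set of \emph{all} canonical forms unless $\fExprsMeta$ is syntactically either the top-level program $\mergeExpr(\lstCfmsMeta_{\!1},\lstCfmsMeta_{\!2})$ or the unfolded form $\FAPP{\FAPP{G}{\lstCfmsMeta_{\!1}}}{\lstCfmsMeta_{\!2}}$, where $G \defEq \LAM{\xFVar}{\LETRECIN{\mergeFId}{\LAM{\xFVar}{\LAM{\xpFVar}{\lcaseExpr}}}{\LAM{\xpFVar}{\lcaseExpr}}}$, and in both cases only when $\listOfLstCfm{\lstCfmsMeta_{\!i}} \in \ints^*$ and $\sorted{\listOfLstCfm{\lstCfmsMeta_{\!i}}}$ for $i\in\{1,2\}$. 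For all other $\fExprsMeta$ the conclusion is immediate, so only these two configurations require work.

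For the top-level program I would unfold $\mi{infer}^{\specMGList}$ along the $\mathsf{letrec}$ rule and then the application rule; since $\specMGList$ is trivial on every intermediate expression, the side condition in $\mi{res}^{\specMGList}$ forces the inference to continue, and one arrives at exactly $\FAPP{\FAPP{G}{\lstCfmsMeta_{\!1}}}{\lstCfmsMeta_{\!2}}$ --- the closure that the $\mathsf{letrec}$ rule binds to $\mergeFId$ being, by design of the specification, the term $G$ itself. Because $\specMGList$ assigns this configuration and the top-level program the same set of canonical forms, this case reduces to the second one, without any symbolic execution of the body.

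For the unfolded form with $\listOfLstCfm{\lstCfmsMeta_{\!1}},\listOfLstCfm{\lstCfmsMeta_{\!2}}$ sorted integer lists, I would symbolically execute by unrolling $\mi{infer}^{\specMGList}$ through the application, $\mathsf{letrec}$, $\mathsf{listcase}$, conditional, and cons rules (all intermediate expressions being specification-trivial, so the unrolling is forced automatically, with the capture-avoiding substitutions simplifying away), and case-split on whether $\lstCfmsMeta_{\!1}$ and $\lstCfmsMeta_{\!2}$ are $\NIL$ or of the form $\CONCAT{\cdot}{\cdot}$. If $\lstCfmsMeta_{\!1}=\NIL$ the execution delivers $\lstCfmsMeta_{\!2}$; if $\lstCfmsMeta_{\!1}=\CONCAT{i}{r}$ and $\lstCfmsMeta_{\!2}=\NIL$ it delivers $\lstCfmsMeta_{\!1}$; in either case membership follows at once from the hypotheses on the arguments. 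If $\lstCfmsMeta_{\!1}=\CONCAT{i}{r}$ and $\lstCfmsMeta_{\!2}=\CONCAT{i'}{r'}$, the execution reaches $\CONCAT{i}{\FAPP{\FAPP{G}{r}}{\lstCfmsMeta_{\!2}}}$ when $i\le i'$ and $\CONCAT{i'}{\FAPP{\FAPP{G}{\lstCfmsMeta_{\!1}}}{r'}}$ otherwise. The key observation is that $\FAPP{\FAPP{G}{r}}{\lstCfmsMeta_{\!2}}$ is again the unfolded form, now applied to the strictly smaller pair $(r,\lstCfmsMeta_{\!2})$, on which $\specMGList$ is non-trivial because $r$ inherits sortedness from $\lstCfmsMeta_{\!1}$; hence $\mi{res}^{\specMGList}$ supplies, without any further inference, that the inferred sublist result $\cfmsMeta'$ represents a sorted integer list whose occurrence count equals $\occAdd{\occ{\listOfLstCfm{r}}}{\occ{\listOfLstCfm{\lstCfmsMeta_{\!2}}}}$. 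Prepending $i$ preserves the multiset of elements, yielding occurrence count $\occAdd{\occ{\listOfLstCfm{\lstCfmsMeta_{\!1}}}}{\occ{\listOfLstCfm{\lstCfmsMeta_{\!2}}}}$, and $i$ is a lower bound of $\listOfLstCfm{r}$ (as $\lstCfmsMeta_{\!1}$ is sorted) and of $\listOfLstCfm{\lstCfmsMeta_{\!2}}$ (as $i\le i'$ and $\lstCfmsMeta_{\!2}$ is sorted), so $\CONCAT{i}{\cfmsMeta'}$ stays sorted; the $i>i'$ branch is symmetric. This yields $\cfmsMeta \in \specMGList(\FAPP{\FAPP{G}{\lstCfmsMeta_{\!1}}}{\lstCfmsMeta_{\!2}})$.

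Note that no induction enters this argument: the recursive call is discharged simply by reading off the specification of the very same unfolded-form shape, the inductive content being already absorbed into \cref{thm:soundness}. The main obstacle is therefore not conceptual but bookkeeping --- tracking the capture-avoiding substitutions performed while executing the two nested $\mathsf{listcase}$s and the $\mathsf{letrec}$ unfolding, and, crucially, verifying that the closure produced for $\mergeFId$ is \emph{syntactically identical} to the $G$ occurring in the specification, which is exactly what makes the specification of the unfolded form applicable at the recursive call. The residual obligations are routine properties of $\mi{occ}$ and $\mi{sorted}$ on mathematical lists. In the Coq formalization these steps are carried out by unfolding $\mi{infer}^{\specMGList}$, simplifying substitutions, and invoking a small collection of list lemmas.
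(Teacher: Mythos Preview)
Your proposal is correct and follows essentially the same approach as the paper: apply \cref{thm:soundness} to reduce to $\verifySpec{\specMGList}$, then discharge the two non-trivial configurations by symbolic execution, using the specification of the unfolded form to cut off the recursive call without induction. The paper's own proof is stated only at the level of ``symbolic execution with the help of a few auxiliary lemmas about substitution and evaluation related to canonical forms,'' and your write-up spells out precisely the case analysis and the syntactic-identity check on the closure (which the paper highlights separately in the remark following the theorem).
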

%
%
With the help of \cref{thm:soundness}, the proof requires no induction
for reasoning about the recursive applications of the function bound
to $\mergeFId$.
This proof boils down to symbolic execution with the help of a few
auxiliary lemmas about substitution and evaluation related to
canonical forms.

\begin{remark}
  It might appear that the auxiliary information needed for the
  verification of the list-merging program should be for expressions
  of the form $\mergeFId~\wildcard~\wildcard$. 
  %
  However, these expressions 
  cannot be evaluated, because information about the actual function
  bound to $\mergeFId$ is missing.
  The form that recurs in the evaluation of
  $\mergeExpr(\lstCfmsMeta_{\!1},\lstCfmsMeta_{\!2})$ is actually
  $(\LAM{\xFVar} {\LETRECIN{\mergeFId}
    {\LAM{\xFVar}{\LAM{\xpFVar}{\lcaseExpr}}}
    {\LAM{\xpFVar}{\lcaseExpr}}}) \,\, \wildcard \,\, \wildcard$.
\end{remark}






\vspace*{-2.5ex}
\section{On Completeness of the Technique} \label{sec:completeness}
\vspace*{-1ex}

%
It is untrue that any valid specification can be verified.
Intuitively, a specification $\specsMeta$ that is valid but missing
the necessary auxiliary information such as loop invariants
might not be verifiable.
However, we show that there is always a more informative specification
$\specsMeta'$ than $\specsMeta$ that is verifiable.

Formally, a specification $\specsMeta_2$ is \emph{at least as
  informative as} a specification $\specsMeta_1$, as denoted by
$\fSpecLe{\specsMeta_1}{\specsMeta_2}$,
if for each configurations $\configsMeta$, it holds that
$\specsMeta_1(\configsMeta)\supseteq \specsMeta_2(\configsMeta)$. 

The lemma below says the specification mapping each configuration to
the set of all the semantically derivable result configurations can be
verified.

\begin{lemma}\label{lem:complete}
  Let
  $\specsMeta_\star\defEq \lambda \configsMeta. \{\rConfigsMeta\mid
  \DERIV{\configsMeta, \rConfigsMeta}\}$.
  Then, $\verifySpec{\specsMeta_\star}$ can be established. 
\end{lemma}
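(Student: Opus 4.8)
The plan is to observe that, for the specific specification $\specsMeta_\star$, the condition $\verifySpec{\specsMeta_\star}$ follows from a single unfolding of the definitions, with no induction needed. Concretely, I would fix $\configsMeta$ and $\rConfigsMeta$ and assume $\INFER{\configsMeta}{\rConfigsMeta}{\specsMeta_\star}$; the goal is to establish $\rConfigsMeta\in\specsMeta_\star(\configsMeta)$, which by the definition of $\specsMeta_\star$ is exactly $\DERIV{\configsMeta,\rConfigsMeta}$.

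Next I would unfold $\INFER{\configsMeta}{\rConfigsMeta}{\specsMeta_\star}$ to obtain some $k$, configurations $\configsMeta_1,\dots,\configsMeta_k$ and result configurations $\rConfigsMeta_1,\dots,\rConfigsMeta_k$ such that $\RULE{\concl{\configsMeta}{\rConfigsMeta}}{\premise{\configsMeta_1}{\rConfigsMeta_1},\dots,\premise{\configsMeta_k}{\rConfigsMeta_k}}$ holds and $\rConfigFor{\configsMeta_i}{\rConfigsMeta_i}{\specsMeta_\star}$ holds for every $i\in\{1,\dots,k\}$. From the first conjunct in the definition of $\mi{res}^{\specsMeta_\star}$ we get $\rConfigsMeta_i\in\specsMeta_\star(\configsMeta_i)$, which unpacks directly to $\DERIV{\configsMeta_i,\rConfigsMeta_i}$ by the definition of $\specsMeta_\star$. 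The implicational second conjunct of $\mi{res}^{\specsMeta_\star}$ plays no role here and can simply be discarded.

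Finally, I would feed these ingredients into the definition of $\mi{deriv}$: possessing a semantic rule with conclusion $(\configsMeta,\rConfigsMeta)$ whose every premise $(\configsMeta_i,\rConfigsMeta_i)$ satisfies $\DERIV{\configsMeta_i,\rConfigsMeta_i}$ is precisely what is required to conclude $\DERIV{\configsMeta,\rConfigsMeta}$, hence $\rConfigsMeta\in\specsMeta_\star(\configsMeta)$, which closes the goal.

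The only point worth emphasizing — and the place where one's intuition might expect more work — is that, unlike the soundness direction (\cref{lem:sound}), this verification statement needs no induction on derivations: the membership test built into $\mi{res}^{\specsMeta_\star}$ already coincides with derivability of the premises, so one application of a semantic rule suffices to re-derive the conclusion. The argument is therefore purely definitional, and it goes through uniformly for any instance of the $\mi{rule}$ predicate.
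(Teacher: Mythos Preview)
Your argument is correct, and it is in fact more direct than the paper's own proof. The paper proceeds by what it calls an ``inductive proof'': after unfolding $\INFER{\configsMeta}{\rConfigsMeta}{\specsMeta_\star}$ exactly as you do, it performs a case split on whether $\specsMeta_\star(\configsMeta_i)=\rConfigs$ for each premise. In the branch $\specsMeta_\star(\configsMeta_i)\neq\rConfigs$ it uses the first conjunct of $\mi{res}^{\specsMeta_\star}$ to get $\DERIV{\configsMeta_i,\rConfigsMeta_i}$ (your route), but in the branch $\specsMeta_\star(\configsMeta_i)=\rConfigs$ it instead extracts $\INFER{\configsMeta_i}{\rConfigsMeta_i}{\specsMeta_\star}$ from the second conjunct and appeals to an induction hypothesis.

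Your observation that the first conjunct \emph{always} suffices --- because $\rConfigsMeta_i\in\specsMeta_\star(\configsMeta_i)$ is by definition $\DERIV{\configsMeta_i,\rConfigsMeta_i}$, even when $\specsMeta_\star(\configsMeta_i)=\rConfigs$ --- makes the case split and the induction superfluous. What your approach buys is a genuinely one-step, definition-unfolding proof; what the paper's approach buys is structural symmetry with the soundness argument (\cref{lem:sound}), which may map more uniformly onto the Coq mechanization where $\mi{infer}$ and $\mi{res}$ are presumably a mutual inductive and the generated induction principle is the natural tool to reach for.
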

  \vspace*{-2ex}

\begin{proof}
  See \cref{asec:proof} and the mechanization in Coq. \qed
\end{proof}

The following theorem says that for each valid specification
$\specsMeta$, there is a valid specification that is at least as
informative as $\specsMeta$, and that can be verified. 

\begin{theorem}[Relative Completeness]
  For each valid specification $\specsMeta$, there exists a
  specification $\specsMeta'$ such that
  $\fSpecLe{\specsMeta}{\specsMeta'}$, and
  $\verifySpec{\specsMeta'}$ can be established.
\end{theorem}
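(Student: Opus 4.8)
The plan is to prove relative completeness constructively, using the specification furnished by \cref{lem:complete} as the witness. Given an arbitrary valid specification $\specsMeta$, I would set $\specsMeta' \defEq \specsMeta_\star$, where $\specsMeta_\star = \lambda\configsMeta.\{\rConfigsMeta \mid \DERIV{\configsMeta,\rConfigsMeta}\}$ maps each configuration to the set of all its semantically derivable result configurations. Intuitively $\specsMeta_\star$ is the ``most informative'' specification that is still sound (its validity is immediate from its definition, and also follows from \cref{thm:soundness} together with \cref{lem:complete}), so it is the natural choice of a verifiable refinement of any valid $\specsMeta$.

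The proof then reduces to two observations. First, $\fSpecLe{\specsMeta}{\specsMeta_\star}$: by definition this means $\specsMeta(\configsMeta) \supseteq \specsMeta_\star(\configsMeta)$ for every $\configsMeta$, and unfolding $\specsMeta_\star$ this says exactly that $\DERIV{\configsMeta,\rConfigsMeta}$ entails $\rConfigsMeta \in \specsMeta(\configsMeta)$, which is the hypothesis $\specValid{\specsMeta}$. Second, $\verifySpec{\specsMeta_\star}$ is precisely the conclusion of \cref{lem:complete}. Putting the two together yields a specification $\specsMeta'$ that is at least as informative as $\specsMeta$ and that is verified, which is what the theorem asserts.

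Hence the theorem itself is a short deduction once \cref{lem:complete} is in hand; the only points requiring care are the orientation of the informativeness order (``at least as informative'' corresponds to the result sets getting smaller, so $\specsMeta_\star$ genuinely refines a valid $\specsMeta$ and not conversely) and the observation that $\specsMeta_\star$ is itself valid. The real --- though still modest --- work has been relocated to \cref{lem:complete}, and I expect that to be the main obstacle: to establish $\verifySpec{\specsMeta_\star}$ one assumes $\INFER{\configsMeta}{\rConfigsMeta}{\specsMeta_\star}$, unfolds $\mi{infer}^{\specsMeta_\star}$ to obtain a semantic rule with conclusion $(\configsMeta,\rConfigsMeta)$ together with premises $(\configsMeta_i,\rConfigsMeta_i)$ each satisfying $\rConfigFor{\configsMeta_i}{\rConfigsMeta_i}{\specsMeta_\star}$, reads off $\rConfigsMeta_i \in \specsMeta_\star(\configsMeta_i)$ --- that is, $\DERIV{\configsMeta_i,\rConfigsMeta_i}$ --- from the first conjunct of $\mi{res}^{\specsMeta_\star}$, and finally reassembles $\DERIV{\configsMeta,\rConfigsMeta}$ from that rule and those premise derivations via the defining clause of $\mi{deriv}$, giving $\rConfigsMeta \in \specsMeta_\star(\configsMeta)$ as required. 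Notably, unlike the soundness argument (\cref{lem:sound}), this does not appear to need an induction, since $\mi{res}^{\specsMeta_\star}$ already delivers the derivation of each premise directly.
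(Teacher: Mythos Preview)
Your proof of the theorem itself is exactly the paper's approach: take $\specsMeta' \defEq \specsMeta_\star$, observe that $\fSpecLe{\specsMeta}{\specsMeta_\star}$ unfolds to the validity hypothesis on $\specsMeta$, and invoke \cref{lem:complete} for $\verifySpec{\specsMeta_\star}$.

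Your added sketch of \cref{lem:complete} differs from the paper in one respect worth noting. The paper carries out that lemma by induction on the structure of $\mi{infer}^{\specsMeta_\star}$, case-splitting on whether $\specsMeta_\star(\configsMeta_i)=\rConfigs$ and appealing to the induction hypothesis in the $\rConfigs$ case. Your remark that no induction is needed is correct: the first conjunct of $\rConfigFor{\configsMeta_i}{\rConfigsMeta_i}{\specsMeta_\star}$ is $\rConfigsMeta_i\in\specsMeta_\star(\configsMeta_i)$, and by the definition of $\specsMeta_\star$ that is literally $\DERIV{\configsMeta_i,\rConfigsMeta_i}$, independently of whether $\specsMeta_\star(\configsMeta_i)$ happens to equal $\rConfigs$. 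So your argument for the lemma is a genuine simplification of the paper's, while the proof of the theorem proper is identical.
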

\renewcommand*{\proofname}{Proof outline}
\begin{proof}
  It is not difficult to show that the specification
  $\specsMeta_\star$ is at least as informative as any valid
  specification. 
  %
  Thus, the conclusion of the theorem follows from
  \cref{lem:complete}.  \qed
\end{proof}
\renewcommand*{\proofname}{Proof}

If the program contained in a configuration exhibits
only bounded behavior, then the corresponding result configuration can
be obtained through symbolic execution. 
Hence, it is not necessary that a verifiable specification should
cover these configurations.
In an informal sense, this argument supports that for a specification
to be verified, it is only necessary to provide auxiliary information
about constructs such as loops and recursive function calls in the
specification.


\vspace*{-2ex}
\section{Discussion}\label{sec:discussion}
\vspace*{-1ex}


\subsubsection*{Reuse of Existing Formalization of Semantics}
With the verification techniques based on small-step
semantics~\cite{Moore2003,MoorePenaRosu2018}, it is not difficult to
obtain a verification infrastructure by reusing an existing
formalization of semantics.
This is because a small-step semantics readily provides a step relation 
that can be used to interface with the verification framework. 
In comparison, we have only shown that our language-independent
verification technique can be applied after the big-step semantics of
the target language is formalized via a predicate that explicitly
captures the premises and conclusions of the semantic rules.
Although we have demonstrated in \cref{sec:verification} with 
different types of languages that the big-step semantics formulated 
using this predicate closely resemble their classical formulation, 
it is desirable if a higher level of reusability can be enabled. 
A potential solution is to construct a program that automatically
transforms a formalization of big-step semantics into a formulation
with the $\mi{rule}$ predicate.
%

\vspace*{-2ex}
\subsubsection*{Integration of Techniques Dealing with other Aspects}
The purpose of the present work is not to simplify the overall task of
deductive program verification beyond achievable by existing
techniques. 
%
Instead, the focus has been the simplification of the verification for
constructs causing potentially unbounded behavior, 
based on a common model of big-step operational semantics.
%
To construct a full-fledged language-independent program verifier in a
proof assistant, effective treatment of other aspects of deductive
program verification (e.g., memory layout, mathematical reasoning for
specific problem domains, etc.), as well as further techniques for
increasing the level of automation, is required.
Existing work in program logics, program verifiers, and theory
libraries in proof assistants are expected to be a crucial source of
inspiration as well as concrete technical components in dealing with
the remaining aspects of the verification.
%
%
%
%



\vspace*{-1ex}
\section{Conclusion} \label{sec:conclusion}
\vspace*{-.5ex}

To tackle the problem caused by the proliferation of programming
languages in deductive program verification, we provide a technique to
address the cross-cutting concern of reasoning about language features
causing unbounded behavior, including loops and recursive function
calls.
The technique can be applied to any programming language, as long as
the big-step operational semantics of the language is formulated with
an explicit characterization of the premises and conclusions of
inference rules.
The user of this technique need not set up inductions for the loops
and recursive calls in performing a program proof, but formulates
invariants and function contracts in a uniform style with a
specification, and perform symbolic execution of the program with the
help of this specification. 
The technique admits succinct, inductive arguments for soundness and
relative completeness that are verified in the Coq proof assistant
along with other formal claims~\cite{Li2021}.
It has been illustrated with verification examples targeting languages
of different paradigms.
It provides a basis for a language-independent tool for program
verification based on big-step operational semantics in proof
assistants.


\newpage


\subsubsection*{Acknowledgment} 
The research was supported by the National Natural Science Foundation of
China (61876111, 61877040, 62002246), the general project numbered
KM202010028010 of Beijing Municipal Education Commission, and the Open
Project CARCH201920 of State Key Laboratory of Computer Architecture,
Institute of Computing Technology, Chinese Academy of Sciences.

\bibliographystyle{abbrv}
\bibliography{references}

\newpage 

\appendix

\section{Supplementary Material for \cref{sec:ext-while}} \label{asec:supp-ext-while}


\subsubsection*{Definitions for the Extended While Language}{~}\\

We define the evaluation function $\mathcal{A}$ for arithmetic
expressions below. 
\small
\begin{align*}
  \ewhAEval{\numeralsMeta}\ewhStatesMeta
  ~\defEq~& \ewhNEval{\numeralsMeta} \\
  \ewhAEval{\varsMeta}\ewhStatesMeta
  ~\defEq~& \ewhStatesMeta(\varsMeta) \\
  \ewhAEval{\arrsMeta}\ewhStatesMeta
  ~\defEq~& \ewhStatesMeta(\arrsMeta) \\
  \ewhAEval{\arrRef{\arrsMeta}{\aExpsMeta_1}}\ewhStatesMeta
  ~\defEq~
          &
            \begin{cases}
              \ewhStatesMeta(\locsMeta+\intsMeta)
              & \mrm{if}~ \exists \locsMeta, \intsMeta: 
              \locsMeta = \ewhAEval{\arrsMeta} 
              \ewhStatesMeta\,\land \,
              \intsMeta = \ewhAEval{\aExpsMeta_1}\ewhStatesMeta \ge 0\,
              \land \, \locsMeta+\intsMeta < \locBoundsMeta \\
              \bot & \mrm{otherwise}
            \end{cases} \\
  \ewhAEval{\aExpsMeta_1\,\aop\,\aExpsMeta_2}\ewhStatesMeta
  ~\defEq~&
            \interp{\aop}
            (\ewhAEval{\aExpsMeta_1}\ewhStatesMeta,
            \ewhAEval{\aExpsMeta_2}\ewhStatesMeta)
\end{align*}
\normalsize
The evaluation of an array identifier $\arrsMeta$ in the state
$\ewhStatesMeta$ yields the starting location of the array identified
by $\ewhStatesMeta$.
The evaluation of the array element expression
$\arrRef{\aExpsMeta_1}{\aExpsMeta_2}$ is the value at the location
$\locsMeta+\intsMeta$, where $\locsMeta$ is the starting location of
the array, and $\intsMeta$ is the index of the array element, if
$\intsMeta$ is non-negative, and the location is within the range
bounded by the next fresh location for arrays.
Otherwise, the evaluation of this expression yields the undefined
value $\bot$.

We define the evaluation function $\mathcal{B}$ for Boolean
expressions of the extended While language as

\vspace*{-2ex}
\small
\begin{align*}
  \ewhBEval{\trueLit}\ewhStatesMeta 
  ~\defEq~& \TT \\
  \ewhBEval{\falseLit}\ewhStatesMeta
  ~\defEq~& \FF \\
  \ewhBEval{\aExpsMeta_1\,\cop\,\aExpsMeta_2}
  \ewhStatesMeta
  ~\defEq~&
            \interp{\cop}(
            \ewhAEval{\aExpsMeta_1}\ewhStatesMeta
            ,
            \ewhAEval{\aExpsMeta_2}\ewhStatesMeta 
            )
  \\
  \ewhBEval{\andExpr{\bExpsMeta_1}{\bExpsMeta_2}}
  \ewhStatesMeta
  ~\defEq~&
            \begin{cases}
              \TT & \mrm{if}~
              \ewhBEval{\bExpsMeta_1}\ewhStatesMeta
              =\TT \land 
              \ewhBEval{\bExpsMeta_2}\ewhStatesMeta
              =\TT \\
              \FF & \mrm{otherwise} 
            \end{cases}
  \\
  \ewhBEval{\notExpr{\bExpsMeta_1}}\ewhStatesMeta
  ~\defEq~&
            \lnot
            \ewhBEval{\bExpsMeta_1}\ewhStatesMeta 
\end{align*}
\normalsize

We denote by
$\callIniSt{\storesMeta}{\listsOfNamesMeta}{\listsOfValsMeta}{\listsOfVarsMeta}$
the initial store for a function with parameters
$\listsOfNamesMeta=\LST{\pdsMeta_1,\dots,\pdsMeta_m}$, argument values
$\listsOfValsMeta=\LST{\valsMeta_1,\dots,\valsMeta_m}$, and return
variables $\listsOfValsMeta=\LST{\varsMeta_1,\dots,\varsMeta_n}$, that
is invoked when the store is $\storesMeta$. 
Formally,
$\callIniSt{\storesMeta}{\listsOfNamesMeta}{\listsOfValsMeta}{\listsOfVarsMeta}$
is the store $\storesMeta'$ such that for each
$\pdsMeta\in \vars\cup\arrs$, it holds that
\[
  \storesMeta'(\pdsMeta)=
  \begin{cases}
    \valsMeta_i & \mrm{if}~i\in \{1,\dots,m\}\,\land\,\pdsMeta=\pdsMeta_i \\
    0 & \mrm{if}~\exists i\in \{1,\dots,n\}:\pdsMeta=\varsMeta_i \\
    \bot & \mrm{otherwise}
  \end{cases}
\]
and for each $\locsMeta\in\nats$, it holds that 
$\storesMeta'(\locsMeta)=\storesMeta(\locsMeta)$.

We denote by
$\callFinSt{\storesMeta}{\storesMeta'}{\listsOfVarsMeta}{\listsOfVarsMeta'}$ the
final store resulting from a call to a function with return variables
$\listsOfVarsMeta=\LST{\varsMeta_1,\dots,\varsMeta_n}$ and the
variables $\listsOfVarsMeta'=\LST{\varsMeta'_1,\dots,\varsMeta'_n}$ in
the caller receiving the return values.
Here, $\storesMeta$ is the store immediately before the call, and
$\storesMeta'$ is the store when the body of the callee stops 
execution.
Formally,
$\callFinSt{\storesMeta}{\storesMeta'}{\listsOfVarsMeta}{\listsOfVarsMeta'}$
is the store $\storesMeta''$ such that for each
$\pdsMeta\in \vars\cup\arrs$, it holds that
\[
  \storesMeta''(\pdsMeta)=
  \begin{cases}
    \storesMeta'(\varsMeta_i) & \mrm{if}~i\in\{1,\dots,n\}
    \,\land\,\pdsMeta=\varsMeta'_i
    \\
    \storesMeta(\pdsMeta) & \mrm{otherwise}
  \end{cases}
\]
and for each $\locsMeta\in \nats$, it holds that
$\storesMeta''(\locsMeta)=\storesMeta'(\locsMeta)$.

\subsubsection*{Definitions for the Array-merging Program and its Verification}{~}\\

We define the expression
$\sepArrFrag{\arrFrag{X}{l_1}{h_1}}
{\arrFrag{Y}{l_2}{h_2}}{\ewhStatesMeta}$ by 

\vspace*{-2ex}
\small
\begin{align*}
  &
    \sepArrFrag{\arrFrag{X}{l_1}{h_1}}{\arrFrag{Y}{l_2}{h_2}}{\ewhStatesMeta}
    ~\defEq~ \\
  &\qquad
    \begin{aligned}
      \exists \locsMeta_1,\locsMeta_2\in \ints:\,
      &
      \ewhStatesMeta(\arrsMeta_1)=\locsMeta_1\,\land\,
      \ewhStatesMeta(\arrsMeta_2)=\locsMeta_2\,\land\, 
      (\locsMeta_1+h_1 < \locsMeta_2+l_2 \lor \locsMeta_2+h_2
      < \locsMeta_1+l_1)
    \end{aligned}
\end{align*}
\normalsize

We specify the second loop in the function $\mergeFId$ as

\vspace*{-2ex}
\small
\begin{align*}
  &
    \specMSort(\ewhConfig{\tailLoopStmt{\iVar}{\mVar}}{\ewhStatesMeta}
    {\mergeSortProg}) 
    ~\defEq~ \\
  &
    \quad 
  \begin{aligned}
    \{
    \ewhStatesMeta' \mid\,
    & 
    \ewhStatesMeta'(\iVar) \ge i \,\land\,
    \ewhStatesMeta'(\iVar) = m+1 \,\land\, 
    \ewhStatesMeta'(\jVar) = j \,\land \, 
    \ewhStatesMeta'(\kVar) = k +
    \ewhStatesMeta'(\iVar)-i \,\land  \\
    &
    \preserved{\ewhStatesMeta}{\ewhStatesMeta'}
    {\mVar, \nVar, \TArr, \arrFrag{\TArr}{l}{k-1}}
    \,\land\,
    \listOfArrFrag{\arrFrag{\SArr}{i}{\ewhStatesMeta'(\iVar)-1}}
    {\ewhStatesMeta}
    =
    \listOfArrFrag{\arrFrag{\TArr}{k}{\ewhStatesMeta'(\kVar)-1}}
    {\ewhStatesMeta'} \,
    \}
  \end{aligned}
      \\\\[-2.5ex]
  &
    \mrm{if}~
    0\le \lParam \le i \le m < \nVal \,\land\,
    j = \nVal + 1 \,\land\, 
    k = i + j - m - 1 \,\land\,
    \sepArrFrag{\arrFrag{\SArr}{\lVal}{\nVal}}
    {\arrFrag{\TArr}{\lVal}{\nVal}}{\ewhStatesMeta}
  \\
  &
    \mrm{where}~
    i = \ewhStatesMeta(\iVar)\,\land\,
    j = \ewhStatesMeta(\jVar)\,\land\,
    k = \ewhStatesMeta(\kVar)\,\land\,
    m = \ewhStatesMeta(\mVar)\,\land\,
    n = \ewhStatesMeta(\nVar)
\end{align*}
\normalsize

We specify the third (last) loop of the function $\mergeFId$ as

\vspace*{-2ex}
\small
\begin{align*}
  &
    \specMSort(\ewhConfig{\tailLoopStmt{\jVar}{\nVar}}{\ewhStatesMeta} 
    {\mergeSortProg}) 
    ~\defEq~ \\
  &
    \quad 
    \begin{aligned}
    \{
    \ewhStatesMeta' \mid\,
    &
    \ewhStatesMeta'(\iVar)=i
    \,\land\,
    \ewhStatesMeta'(\jVar)\ge j \,\land\,
    \ewhStatesMeta'(\jVar)=n+1 \,\land\,
    \ewhStatesMeta'(\kVar)=k+\ewhStatesMeta'(\jVar)-j \,\land\,
    \\
    &
    \preserved{\ewhStatesMeta}{\ewhStatesMeta'}
    {\mVar,\nVar,\TArr,\arrFrag{\TArr}{\lVal}{\kVal-1}}
    \,\land\, 
    \listOfArrFrag{\arrFrag{\SArr}{j}{\ewhStatesMeta'(\jVar)-1}}
    {\ewhStatesMeta}
    =
    \listOfArrFrag{\arrFrag{\TArr}{k}{\ewhStatesMeta'(\kVar)-1}}
    {\ewhStatesMeta'} \, 
    \}      
  \end{aligned}
  \\\\[-2.5ex]
  &
    \mrm{if}~
    0\le \lParam \,\le m <\, j\le n \,\land\,
    i = m + 1 \,\land\,
    k = i + j - m - 1 \,\land\,
    \sepArrFrag{\arrFrag{\SArr}{\lVal}{\nVal}}
    {\arrFrag{\TArr}{\lVal}{\nVal}}
    {\ewhStatesMeta} 
  \\
  &
    \mrm{where}~
    i = \ewhStatesMeta(\iVar)\,\land\,
    j = \ewhStatesMeta(\jVar)\,\land\,
    k = \ewhStatesMeta(\kVar)\,\land\,
    m = \ewhStatesMeta(\mVar)\,\land\,
    n = \ewhStatesMeta(\nVar)
\end{align*}
\normalsize

\section{Supplementary Material for \cref{ssec:functional}} \label{asec:subst}

We define the substitution of canonical forms for variables in
expressions, denoted by $\fSubst{\fExprsMeta}{\fVarsMeta}{\cfmsMeta}$,
as follows.

\small
\begin{align*}
  \fSubst{\numeralsMeta}{\fVarsMeta}{\cfmsMeta}
  \,\defEq\,
  &
    \numeralsMeta \\
  \fSubst{\trueLit}{\fVarsMeta}{\cfmsMeta}
  \,\defEq\,
  &
    \trueLit \\
  \fSubst{\falseLit}{\fVarsMeta}{\cfmsMeta}
  \,\defEq\,
  &
    \falseLit \\
  \fSubst{(\fExprsMeta_1\,\op\,\fExprsMeta_2)}{\fVarsMeta}{\cfmsMeta}
  \,\defEq\,
  &
    \fSubst{\fExprsMeta_1}{\fVarsMeta}{\cfmsMeta}\,
    \op\, \fSubst{\fExprsMeta_2}{\fVarsMeta}{\cfmsMeta}
    \quad \mrm{where}~\op\in \{+,-,*,/,=,<\} \\
  \fSubst{(\lnot \exprsMeta_1)}{\fVarsMeta}{\cfmsMeta}
  \,\defEq\,
  &
    \lnot (\fSubst{\exprsMeta_1}{\fVarsMeta}{\cfmsMeta}) \\
  \fSubst{(\FAND{\fExprsMeta_1}{\fExprsMeta_2})}
  {\fVarsMeta}{\cfmsMeta}
  \,\defEq\,
  &
    \FAND
    {\fSubst{\fExprsMeta_1}{\fVarsMeta}{\cfmsMeta}} 
    {\fSubst{\fExprsMeta_2}{\fVarsMeta}{\cfmsMeta}} \\
  \fSubst{(\IFTHENELSE{\fExprsMeta_1}{\fExprsMeta_2}{\fExprsMeta_3})}
  {\fVarsMeta}{\cfmsMeta}
  \,\defEq\,
  &
    \IFTHENELSE
    {\fSubst{\fExprsMeta_1}{\fVarsMeta}{\cfmsMeta}}
    {\fSubst{\fExprsMeta_2}{\fVarsMeta}{\cfmsMeta}}
    {\fSubst{\fExprsMeta_3}{\fVarsMeta}{\cfmsMeta}} \\
  \fSubst{\NIL}{\fVarsMeta}{\cfmsMeta}
  \,\defEq\,
  &
    \NIL \\
  \fSubst{(\CONCAT{\fExprsMeta_1}{\fExprsMeta_2})}
  {\fVarsMeta}{\cfmsMeta}
  \,\defEq\,
  &
  \CONCAT
  {\fSubst{\fExprsMeta_1}{\fVarsMeta}{\cfmsMeta}}
    {\fSubst{\fExprsMeta_2}{\fVarsMeta}{\cfmsMeta}} \\
  \fSubst{(\LCASE{\fExprsMeta_1}{\fExprsMeta_2}{\fExprsMeta_3})}
  {\fVarsMeta}{\cfmsMeta}
  \,\defEq \,
  &
  \LCASE
  {\fSubst{\fExprsMeta_1}{\fVarsMeta}{\cfmsMeta}}
  {\fSubst{\fExprsMeta_2}{\fVarsMeta}{\cfmsMeta}}
    {\fSubst{\fExprsMeta_3}{\fVarsMeta}{\cfmsMeta}} \\
  \fSubst{\fVarsMeta}{\fVarsMeta'}{\cfmsMeta}
  \,\defEq\,
  &
    \begin{cases}
      \cfmsMeta & \mrm{if}~\fVarsMeta=\fVarsMeta' \\
      \fVarsMeta & \mrm{otherwise}
    \end{cases}
  \\
  \fSubst{(\FAPP{\fExprsMeta_1}{\fExprsMeta_2})}
  {\fVarsMeta}{\cfmsMeta}
  \,\defEq\,
  &
  \FAPP
  {\fSubst{\fExprsMeta_1}{\fVarsMeta}{\cfmsMeta}\,}
    {\fSubst{\fExprsMeta_2}{\fVarsMeta}{\cfmsMeta}} \\
  \fSubst
  {(\LAM{\fVarsMeta}{\fExprsMeta_1})}
  {\fVarsMeta'}{\cfmsMeta}
  \,\defEq\, 
  &
    \begin{cases}
      \LAM{\fVarsMeta}{(\fSubst{\fExprsMeta_1}{\fVarsMeta'}{\cfmsMeta})}
      & \mrm{if}~\fVarsMeta\neq \fVarsMeta' \\
      \LAM{\fVarsMeta}{\fExprsMeta_1}
      & \mrm{otherwise}
    \end{cases}
  \\
  \fSubst{
  (\LETRECIN{\fVarsMeta_1}{\LAM{\fVarsMeta_2}{\fExprsMeta_2}}
  {\fExprsMeta_1})}
  {\fVarsMeta}{\cfmsMeta}
  \,\defEq\,
  &
    \begin{cases}
      \LETRECIN{\fVarsMeta_1}
      {\fSubst{(\LAM{\fVarsMeta_2}{\fExprsMeta_2})}{\fVarsMeta}{\cfmsMeta}}
      {\fSubst{\fExprsMeta_1}{\fVarsMeta}{\cfmsMeta}}
      & \mrm{if}~\fVarsMeta\neq \fVarsMeta_1 \\
      \LETRECIN{\fVarsMeta_1}
      {(\fSubst{\LAM{\fVarsMeta_2}{\fExprsMeta_2})}{\fVarsMeta}{\cfmsMeta}}
      {\fExprsMeta_1}
      & \mrm{otherwise}
    \end{cases}
\end{align*}
\normalsize


\section{Supplementary Material for \cref{sec:completeness}} \label{asec:proof}

Below, we present the proof of \cref{lem:complete} that is central to
the establishment of the completeness result of the proposed reasoning
technique.

\begin{proof}[of \cref{lem:complete}]  
  We show that for all $\configsMeta$, $\rConfigsMeta$,  
  if $\INFER{\configsMeta}{\rConfigsMeta}{\specsMeta_\star}$,
  then $\rConfigsMeta\in\specsMeta_\star(\configsMeta)$.
  This boils down to showing
  if $\INFER{\configsMeta}{\rConfigsMeta}{\specsMeta_\star}$,
  then 
  $\DERIV{\configsMeta,\rConfigsMeta}$. 
  Below, we give an inductive proof of this statement. 

  Assume $\INFER{\configsMeta}{\rConfigsMeta}{\specsMeta_\star}$. 
  Then, there exist some $m$, $\configsMeta_1$, \dots, $\configsMeta_m$,
  $\rConfigsMeta_1$, \dots, $\rConfigsMeta_m$, and $\rConfigsMeta$, 
  such that 
  $\rConfigFor{\configsMeta_1}{\rConfigsMeta_1}{\specsMeta_\star}$,
  \dots,
  $\rConfigFor{\configsMeta_m}{\rConfigsMeta_m}{\specsMeta_\star}$, 
  and
  \begin{equation}\label{eq:rule_m_completeness}
    \RULE
    {\concl{\configsMeta}{\rConfigsMeta}} 
    {
      \premise{\configsMeta_1}{\rConfigsMeta_1}, \dots,
      \premise{\configsMeta_m}{\rConfigsMeta_m} 
    }
  \end{equation}
  For each $i$, we show that $\DERIV{\configsMeta_i,\rConfigsMeta_i}$ 
  holds by distinguishing between the cases where 
  $\specsMeta_\star(\configsMeta_i)=\rConfigs$
  and $\specsMeta_\star(\configsMeta_i)\neq \rConfigs$. 
  \begin{itemize}
  \item Suppose $\specsMeta_\star(\configsMeta_i)=\rConfigs$. Then
  	we deduce $\INFER{\configsMeta_i}{\rConfigsMeta_i}{\specsMeta_\star}$ 
  	from $\rConfigFor{\configsMeta_i}{\rConfigsMeta_i}{\specsMeta_\star}$.
  	Hence, we have $\DERIV{\configsMeta_i,\rConfigsMeta_i}$ from the 
  	induction hypothesis.  
  \item Suppose $\specsMeta_\star(\configsMeta_i)\neq \rConfigs$. Then
  	we have $\rConfigsMeta_i\in \specsMeta_\star(\configsMeta_i)$ using
  	$\rConfigFor{\configsMeta_i}{\rConfigsMeta_i}{\specsMeta_\star}$.
  	Hence, we have $\DERIV{\configsMeta_i,\rConfigsMeta_i}$ using the
  	definition of $\specsMeta_\star$.
  \end{itemize}
  Ultimately, we have $\DERIV{\configsMeta_i,\rConfigsMeta_i}$ for each
  $i\in \{1,\dots,m\}$, and we obtain $\DERIV{\configsMeta,\rConfigsMeta}$ 
  using \cref{eq:rule_m_completeness}.
  This completes the proof.  \qed
\end{proof}


\end{document}
